\DeclareMathOperator{\supp}{supp}
\newcommand{\Alg}{{\texttt{Residual Random Greedy}}\xspace}
\newcommand{\AlgGreedy}{{\texttt{Standard Greedy}}\xspace}
\newcommand{\AlgRand}{{\texttt{Random}}\xspace}
\newtheorem{theorem}{Theorem}[section]
\newtheorem{lemma}[theorem]{Lemma}
\newtheorem{corollary}[theorem]{Corollary}
\newtheorem{definition}{Definition}[section]
\newtheorem{observation}[theorem]{Observation}
\newtheorem*{rep@theorem}{\rep@title}
\newcommand{\newreptheorem}[2]{%
\newenvironment{rep#1}[1]{%
 \def\rep@title{#2 \ref{##1}}%
 \begin{rep@theorem}}%
 {\end{rep@theorem}}}
\newcommand{\defcal}[1]{\expandafter\newcommand\csname c#1\endcsname{{\mathcal{#1}}}}
\newcommand{\defbb}[1]{\expandafter\newcommand\csname b#1\endcsname{{\mathbb{#1}}}}
\newcounter{calBbCounter}
    \edef\letter{\Alph{calBbCounter}}
\newcommand{\eps}{\varepsilon}
\newcommand{\ie}{{\it i.e.}}
\newcommand{\nnR}{{\bR_{\geq 0}}}
\newcommand{\RRG}{\Alg}
\title{Weakly Submodular Maximization Beyond Cardinality Constraints: Does Randomization Help Greedy?}
\author{
Lin Chen\thanks{Yale Institute for Network Science, Yale University.  E-mail: \texttt{lin.chen@yale.edu}. Authors are listed in alphabetical order.}
\and
Moran Feldman\thanks{Depart. of Mathematics and Computer Science, The Open University of Israel.
E-mail: \texttt{moranfe@openu.ac.il}.}
\and
Amin Karbasi\thanks{Yale Institute for Network Science, Yale University. E-mail: \texttt{amin.karbasi@yale.edu}.}
}
\begin{document}
\maketitle
\begin{abstract}
	Submodular functions are a broad class of set functions, which naturally arise in diverse areas such as economics, operations research and game theory. Many algorithms have been suggested for the maximization of these functions, achieving both strong theoretical guarantees and good practical performance. Unfortunately, once the function deviates from submodularity (even slightly), the known algorithms for submodular maximization may perform arbitrarily poorly. Amending this issue, by obtaining approximation results for classes of set functions generalizing submoduolar functions, has been the focus of several recent works.
	
	One such class, known as the class of weakly submodular functions, has received a lot of attention from the machine learning community due to its strong connections to restricted strong convexity and sparse reconstruction. A key result proved by Das and Kempe~(2011) showed that the approximation ratio of the standard greedy algorithm for the problem of maximizing a weakly submodular function subject to a cardinality constraint degrades smoothly with the distance of the function from submodularity. However, no results have been obtained so far for the maximization of weakly submodular functions subject to constraints beyond cardinality. In particular, it is not known whether the greedy algorithm achieves any non-trivial approximation ratio for such constraints.
	
	In this paper, we prove that a randomized version of the greedy algorithm (previously used by Buchbinder et al.~(2014) for a different problem) achieves an approximation ratio of $(1 + 1/\gamma)^{-2}$ for the maximization of a weakly submodular function subject to a general matroid constraint, where $\gamma$ is a parameter measuring the distance of the function from submodularity. Moreover, we also experimentally compare the performance of this version of the greedy algorithm on real world problems such as gene splice site detection and video summarization against natural benchmarks, and show that the algorithm we study performs well also in practice. To the best of our knowledge, this is the first algorithm with a non-trivial approximation guarantee for maximizing a weakly submodular function subject to a constraint other than the simple cardinality constraint. In particular, it is the first algorithm with such a guarantee for the important and broad class of matroid constraints.
	
\medskip
\noindent \textbf{keywords:} weakly submodular functions, optimization, matroid constraint
\end{abstract}

\thispagestyle{empty}
\newpage
\pagenumbering{arabic}

\section{Introduction}

Motivated by the frequent appearances of submodular functions in both theoretical and practical settings, the last decade has seen a proliferation of works on maximization of submodular functions. In particular, algorithms for maximizing a submodular function subject to various constraints have found many applications in machine learning and data mining, including data summarization \cite{mirzasoleiman16distributed, wei2013using}, document summarization \cite{LB10,LB11}, sensor placement \cite{KSG08,KG05}, network reconstruction \cite{chen2016submodular, gomez2010inferring}, crowd teaching \cite{singla2014near}, spread of influence \cite{KKT03} and article recommendation \cite{el2011beyond,mirzasoleiman2016fast}.

%

Despite the above mentioned abundance of settings which give raise to submodular functions, it has been observed that there are also many settings inducing functions that are \emph{close} to submodular (in some sense), but not strictly submodular. Unfortunately, algorithms that have been developed for maximization of true submodular functions often fail miserably when given a function which is only close to submodular. This hurdle has motivated the development of algorithms whose guarantee degrades gracefully with the distance of the function from submodularity. In particular, such algorithms have been developed for functions that are: close to submodular under a distance measure known as the supermodular degree~\cite{FI13,FI14,FI17},
close to a submodular function up to a multiplicative factor~\cite{HS16},
 noisy versions of submodular functions under various noise models~\cite{HS17}, almost submodular in the sense that they satisfy the submodularity inequality $ f(A) + f(B) \geq f(A\cup B) + f(A\cap B) $ up to a fixed constant~\cite{bateni2013submodular}
 or belong to a class of functions known as hypergraph-$ r $ valuations which restricts the kinds of interplay between elements that can affect a function's value~\cite{ABDR12}. 

A particularly important class of close to submodular functions, known as $\gamma$-weakly submodular functions (where $\gamma$ is a parameter measuring the distance of the function from being submodular), has received a lot of attention from the machine learning community. Weakly submodular functions were originally introduced by a work of Das and Kempe~\cite{DK11}, which showed that the standard greedy algorithm achieves a good approximation ratio of $1 - e^{-\gamma}$ for the problem of maximizing such functions subject to a cardinality constraint. Further works developed more sophisticated algorithms for the same maximization problem and demonstrated a large repertoire of applications captured by it. For example, Elenberg et al.~\cite{EDFK17} described a streaming algorithm for the above maximization problem and used it to get a faster algorithm for interpreting outputs of neural networks. By relying on previous works which showed that submodular functions can be maximized by faster versions of the standard greedy algorithm that are either stochastic or distributed \cite{MBKVK15,mirzasoleiman2013distributed},  Khanna et al.~\cite{KEDNG17} showed that these faster versions of the greedy algorithm can also be used for maximizing weakly submodular functions. 
%
Hu et al.~\cite{hu2016efficient} proposed an algorithm that achieved anytime linear prediction by sequencing the computation of feature groups, whose theoretical guarantee is established based on weak submodularity (they term it $ \gamma $-approximate submodularity in the paper). Finally, Qian et al.~\cite{qian2016parallel} leveraged weak submodularity in the design of an approach for the parallel Pareto optimization problem for subset selection, which emerged as a powerful approximate solver for the subset selection problem.

To the best of our knowledge, all existing works regarding the maximization of $\gamma$-weakly submodular functions assume a simple cardinality constraint, and thus, cannot be applied to applications which require more involved constraints. In this paper we make a first step towards amending this situation. Specifically, we show that an algorithm called {\RRG} (originally given by~\cite{BFNS14} for submodular maximization) yields through a more involved analysis the first non-trivial approximation ratio for maximizing a $\gamma$-weakly submodular function subject to a general matroid constraint.

The above result has two important implications. First, as explained above, it opens the door to more involved applications. The second implication is related to the fact that {\RRG} can be viewed as a randomized version of the greedy algorithm. This makes it possible to view our analysis of {\RRG} as an evidence that the standard greedy algorithm works most of time. In other words, we expect the greedy algorithm to produce a good approximation ratio on instances that are not specifically engineered to make it perform poorly. To see whether this is indeed the case, we have conducted four sets of experiments. The first set studies the linear regression problem on synthetic data, and the other sets correspond to real-world application scenarios and use real data (the three real-world application scenarios are video summarization, splice site detection and black-box interpretation of images). 
 As expected, our experiments show that {\RRG} and the greedy algorithm have comparable performance on real-world instances. Moreover, it turns out that the greedy algorithm even manages to outperform {\RRG} on many such instances.


\subsection{Preliminaries and Results}

In this section we present the notation and definitions we use in this paper, including the definition of $\gamma$-weak submodularity. We then use these notation and definitions to present our result formally.

We say that a set function $f\colon 2^\cN \to \bR$ over a ground set $\cN$ is monotone if $f(A) \leq f(B)$ for every two sets $A \subseteq B \subseteq \cN$. Furthermore, given two subsets $A, B \subseteq \cN$, we denote by $f(B \mid A)$ the marginal contribution of adding the elements of $B$ to $A$. More formally, $f(B \mid A) = f(A \cup B) - f(A)$. In many cases the subset $B$ in the above definition will be a singleton set $\{u\}$. In these cases we write, for simplicity, $f(u \mid A)$ instead of $f(\{u\} \mid A)$. Additionally, we occasionally use $A + u$ and $A - u$ as shorthands for the union $A \cup \{u\}$ and the expression $A \setminus \{u\}$, respectively.

Using this notation we can now define $\gamma$-weak submodularity as follows (this definition differs slightly from the original definition of $\gamma$-weakly submodular functions by~\cite{DK11}. We discuss this in more detail later in this section).
\begin{definition}
A set function $f\colon 2^\cN \to \bR$ is $\gamma$-weakly submodular for some $\gamma \in (0, 1]$ if
\begin{equation} \label{eq:weak_submodularity}
	\sum_{u \in B} f(u \mid A) \geq \gamma \cdot f(B \mid A)
\end{equation}
for every two sets $A, B \subseteq \cN$.
\end{definition}

Next, we would like to remind the reader of the formal definition of a matroid.\footnote{We assume basic knowledge of matroid theory and matroid related terms, such as ``base'' and ``rank''. The definition of these terms can be found, for example, in \cite[Volume B]{S03}.}
\begin{definition}
Consider a ground set $\cN$ and a non-empty collection $\cI \subseteq 2^\cN$ of subsets of $\cN$. The pair $(\cN, \cI)$ is a matroid if for every two sets $A, B \subseteq \cN$:
\begin{itemize}
	\item If $A \subseteq B$ and $B \in \cI$, then $A \in \cI$.
	\item If $|A| < |B|$ and $B \in \cI$, then there exists an element $u \in B \setminus A$ such that $A \cup \{u\} \in \cI$.
\end{itemize}
Furthermore, the sets in $\cI$ are called the \emph{independent} sets of the matroid.
\end{definition}

Matroids are important because they capture many natural structures. For example, the set of forests of a graph form a matroid known as the graphical matroid of this graph. Consequently, the maximization of various set functions subject to a general matroid constraint has been studied extensively (see, for example,~\cite{CCPV11,FI14,FNS11}). In this paper we are interested in the problem of maximizing a non-negative monotone $\gamma$-weakly submodular function $f\colon 2^\cN \to \nnR$ subject to a matroid $\cM = (\cN, \cI)$ constraint. In other words, we want to find an independent set of the matroid maximizing $f$. Our main result for this problem is given by the following theorem.
\begin{theorem} \label{thm:main_result}
The {\RRG} algorithm of Buchbinder et al.~\cite{BFNS14} has an approximation ratio of at least $(1 + 1/\gamma)^{-2}$ for the problem of maximizing a non-negative monotone $\gamma$-weakly submodular function subject to a matroid constraint.
\end{theorem}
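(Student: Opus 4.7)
My plan is to prove \cref{thm:main_result} by adapting the analysis of the Residual Random Greedy algorithm by Buchbinder et al.~to the weakly submodular setting, with a careful accounting of the parameter $\gamma$. The argument proceeds in three stages.

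\emph{Stage 1 (per-iteration greedy gain via matroid exchange).} Let $k$ denote the rank of the matroid $\cM$ and let $OPT$ be a fixed optimal base. The {\RRG} algorithm builds a sequence of independent sets $\emptyset = S_0 \subsetneq S_1 \subsetneq \cdots \subsetneq S_k$: at iteration $i$ it computes a set $M_i \subseteq \cN \setminus S_{i-1}$ of size $k - i + 1$ such that $S_{i-1} \cup M_i$ is a base of $\cM$ and $\sum_{u \in M_i} f(u \mid S_{i-1})$ is maximized, and then chooses $u_i$ uniformly at random from $M_i$. By the matroid augmentation axiom there is a set $X_i \subseteq OPT \setminus S_{i-1}$ of size $k - i + 1$ such that $S_{i-1} \cup X_i$ is also a base; equivalently, $M_i$ and $X_i$ are both bases of the contracted matroid $\cM / S_{i-1}$. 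Brualdi's base-exchange theorem supplies a bijection $\phi_i \colon M_i \to X_i$ such that $(M_i \setminus \{u\}) \cup \{\phi_i(u)\}$ is a base of $\cM / S_{i-1}$ for every $u \in M_i$. The optimality of $M_i$ then forces $f(u \mid S_{i-1}) \geq f(\phi_i(u) \mid S_{i-1})$ for each such $u$, and summing, followed by a use of weak submodularity, gives
\[
\sum_{u \in M_i} f(u \mid S_{i-1}) \;\geq\; \sum_{v \in X_i} f(v \mid S_{i-1}) \;\geq\; \gamma \cdot \bigl( f(S_{i-1} \cup X_i) - f(S_{i-1}) \bigr).
\]

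\emph{Stage 2 (coupling with a random sequence of bases).} The principal difficulty is that $S_{i-1} \cup X_i$ is a strict subset of $S_{i-1} \cup OPT$ whenever $S_{i-1} \not\subseteq OPT$, so the right-hand side above does not immediately bound anything directly tied to $f(OPT)$. To address this, I would define a random sequence of bases $OPT_0 = OPT, OPT_1, \ldots, OPT_k$ coupled to the algorithm, such that $S_i \subseteq OPT_i$ throughout and $OPT_k = S_k$. Each $OPT_i$ is produced from $OPT_{i-1}$ by replacing a single element (selected via an extension of $\phi_i$ so that the exchange is valid) with $u_i$. The same matroid exchange that powers Stage 1 can then be used to upper bound the expected loss $E[f(OPT_{i-1} \cup S_{i-1}) - f(OPT_i \cup S_i)]$ via weak submodularity, while also replacing $S_{i-1} \cup X_i$ in the Stage 1 inequality by the $OPT$-aligned quantity $OPT_{i-1} \cup S_{i-1}$.

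\emph{Stage 3 (solving the recursion).} Combining the two inequalities from the previous stages yields a coupled recursion between the expectations of $f(S_i)$ and $f(OPT_i \cup S_i)$. Unrolling across $i = 1, \ldots, k$ and exploiting that $f(OPT_0 \cup S_0) = f(OPT)$ and $f(OPT_k \cup S_k) = f(S_k)$ should produce the claimed ratio $(1 + 1/\gamma)^{-2}$. I expect the technical heart of the proof to lie in Stage 2: orchestrating the coupling so that the conditional distribution of the element exchanged out of $OPT_{i-1}$ interacts properly with the uniformly random $u_i$, and controlling the resulting $OPT$-loss sharply by weak submodularity. The exponent $2$ in the final ratio strongly suggests that weak submodularity must enter twice, once per side of the recursion, and that its $\gamma$ factors compound multiplicatively in the final bound.
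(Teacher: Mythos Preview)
Your high-level architecture matches the paper's: a coupled pair of random sequences $(S_i, OPT_i)$, a per-step gain inequality for $S_i$, a per-step loss control for $OPT_i$, and a final recursion. Stage~1 is correct, though the Brualdi bijection is superfluous there---$X_i$ is itself a base of $\cM/S_{i-1}$, so the sum inequality follows immediately from the optimality of $M_i$; no elementwise comparison is needed.

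The genuine gap is in Stage~2. You propose to \emph{upper bound} the expected loss $\bE[f(OPT_{i-1}\cup S_{i-1}) - f(OPT_i\cup S_i)]$ ``via weak submodularity.'' But weak submodularity is a \emph{lower} bound on sums of marginals, $\sum_{u\in B} f(u\mid A)\ge \gamma\, f(B\mid A)$; it gives no upper bound on a single marginal $f(w_i\mid \cdot)$. In the submodular analysis of Buchbinder et al.\ this loss is controlled by diminishing returns, i.e.\ $f(w_i\mid OPT_i)\le f(w_i\mid S_{i-1})$, which is precisely the inequality that fails once you relax to $\gamma$-weak submodularity. So the step you flag as ``the technical heart'' cannot be executed as stated.

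The paper sidesteps this obstacle with two choices you have not made. First, it defines $OPT_i$ as a \emph{subset of} $OPT$ (namely $OPT_i = OPT_{i-1}-g_i(u_i)$, with $S_i\cup OPT_i$ a base), rather than a base containing $S_i$; the point is that the removed element $g_i(u_i)$ is then uniform on $OPT_{i-1}$, so $OPT_i$ is a uniformly random $(k-i)$-subset of $OPT$. Second, it exploits this uniformity through a \emph{backward} recursion: view $OPT_i$ as $OPT_{i+1}$ plus a uniformly random element of $OPT\setminus OPT_{i+1}$, and apply weak submodularity to the \emph{addition}, yielding
\[
\bE[f(OPT_i)\mid OPT_{i+1}] \;\ge\; \Bigl(1-\tfrac{\gamma}{i+1}\Bigr)f(OPT_{i+1}) + \tfrac{\gamma}{i+1} f(OPT)\,.
\]
Solving from $i=k$ down gives $\bE[f(OPT_i)]\ge \bigl[1-((i{+}1)/(k{+}1))^\gamma\bigr]f(OPT)$. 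Combined with your Stage~1 inequality (with $X_i$ taken to be this $OPT_{i-1}$) and monotonicity, one obtains $(1+1/\gamma)^{-2}-O(k^{-1})$; a separate padding argument then removes the $O(k^{-1})$ term, which your plan also does not account for.
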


Two remarks about this result are now in place. First, we would like to point out that {\RRG} is quite efficient. In the analysis of such algorithms it is standard practice to assume that the algorithm has access to two oracles: a value oracle that given a set $S \subseteq \cN$ returns the value of the objective function for that set, and an independence oracle that given $S$ determines whether it is independent. Given such oracles, {\RRG} requires only $O(nk)$ queries to each one of them, where $n$ is the size of the ground set $\cN$ and $k$ is the rank of the matroid (\ie, the size of the largest independent set in it). In the rest of this paper we often use $n$ and $k$ to denote their values as defined here.

Our second remark regarding Theorem~\ref{thm:main_result} is related to the definition of $\gamma$-weak submodularity given above. As mentioned, this definition is slightly different from the original definition of $\gamma$-weakly submodular functions by~\cite{DK11}. The original definition was weaker in the sense that it required Inequality~\eqref{eq:weak_submodularity} to hold only for small sets, \ie, sets whose size is at most comparable to the size of the largest possible feasible solution. For the sake of keeping the definition as clean as possible, we dropped this extra complication from our definition of weak submodularity. However, we note that, when one employs algorithms for weak submodular optimization to solve real-world problems, it is often useful to have the weakest possible definition because this makes it more likely for the real-world objective function to fall into the definition. Thus, we would like to point out that Theorem~\ref{thm:main_result} applies even when the objective function only obeys the following weaker definition (with respect to the matroid $\cM$ defining the constraint). Interestingly, this weaker definition is even weaker than the original definition of~\cite{DK11} for weak submodularity.
\begin{definition}
A set function $f\colon 2^\cN \to \bR$ is $(\gamma, \cM)$-restricted weakly submodular for some $\gamma \in (0, 1]$ and matroid $\cM = (\cN, \cI)$ if
\[
	\sum_{u \in B} f(u \mid A) \geq \gamma \cdot f(B \mid A)
\]
for every two sets $A, B \subseteq \cN$ such that $A \cup B \in \cI$.
\end{definition}

\subsection{Additional Related Work}

The study of the maximization of monotone submodular functions subject to a matroid constraint can be traced back to the 1970's. Nemhauser et al.~\cite{NWF78} and Fisher et al.~\cite{FNW78} proved that the standard greedy algorithm achieves approximation ratios of $1 - 1/e \approx 0.632$ and $1/2$ for this problem when the matroid is a uniform matroid and a general matroid, respectively. The approximation ratio for uniform matroid constraints was discovered, at roughly the same time, to be optimal~\cite{NW78}. However, the question regarding the optimality of the $(1/2)$-approximation algorithm for general matroid constraints remained open for many years. A decade ago, this question was finally solved by a celebrated result of C{\u{a}}linescu et al.~\cite{CCPV11} who described a $(1-1/e)$-approximation algorithm for maximizing a monotone submodular function subject to a general matroid constraint. The result of~\cite{CCPV11} proved that exactly the same approximation ratio can be achieved for the maximization of monotone submodular functions subject to uniform and general matroid constraints, which implies that in some sense general matroid constraints are not more difficult than uniform matroid constraints. Nevertheless, there is still a significant gap between the time complexities of the fastest algorithms known for the two types of constraints~\cite{BFS17,MBKVK15}, and closing this gap (or proving that it cannot be done) remains an intriguing open question for future research.

The result of~\cite{CCPV11} has motivated a long series of works on the maximization of non-monotone submodular functions subject to a matroid constraint~\cite{BFNS14,CVZ14,EN16,FNS11,GV11}. The currently best algorithm of this kind achieves an approximation ratio of $0.385$~\cite{BF16} for general matroid constraints, and no better approximation guarantee is known for uniform matroid constraints. On the inapproximability side, it is known that no polynomial time algorithm can achieve approximation ratios better than $0.491$ and $0.478$ for the maximization of non-monotone submodular functions subject to uniform and general matroid constraints, respectively. Hence, for non-monotone submodular functions there is still a large gap between the best known approximation and inapproximability results. This gap is somewhat bridged by a result of~\cite{F17} giving a $0.432$-approximation algorithm for the maximization of a special class of non-monotone submodular functions, known as symmetric submodular functions, subject to a general matroid constraint.
\paragraph{Paper Organization.}

The rest of this paper is organized as follows. In Section~\ref{sec:algorithm} we present {\RRG} and analyze it formally (some of the technical details of the analysis are deferred to Appendix~\ref{app:low_order_term}). Section~\ref{sec:experiment} describes our experimental results. Finally, Section~\ref{sec:conclusion} contains some concluding remarks and points out an interesting direction for future research.
\section{The Algorithm} \label{sec:algorithm}

In this section we present the {\RRG} algorithm of~\cite{BFNS14} and use it to prove Theorem~\ref{thm:main_result}. The pseudocode of this algorithm is given as Algorithm~\ref{alg:ResidualRandomGreedy}. In this pseudocode we use the notation $\cM / S$ to denote the matroid obtained from the input matroid $\cM$ by contracting a set $S$. Informally, Algorithm~\ref{alg:ResidualRandomGreedy} grows a solution $S$ in $k$ rounds, where each round consists of two steps. In the first step, the algorithm assigns to each element a weight which is equal to the marginal contribution of this element to the current solution $S$. Then, in the second step of the round, the algorithm finds a set $M$ of maximum weight among all sets whose union with the current solution $S$ is independent, and adds a uniformly random element from $M$ to $S$.

\begin{algorithm}
\caption{\textsf{Residual Random Greedy for Matroids}$(f, \cM)$} \label{alg:ResidualRandomGreedy}
Initialize: $S_0 \leftarrow \varnothing$.\\
\For{$i$ = $1$ \KwTo $k$}
{
    Let $M_i$ be a base of $\cM / S_{i - 1}$ maximizing $\sum_{u \in M_i} f(u \mid S_{i - 1})$. \label{line:M_construction}\\
    Let $u_i$ be a uniformly random element from $M_i$.\\
    $S_i \leftarrow S_{i - 1} + u_i$.
}
Return $S_k$.
\end{algorithm}

We begin the analysis of Algorithm~\ref{alg:ResidualRandomGreedy} with the following simple observation.
\begin{observation}
Algorithm~\ref{alg:ResidualRandomGreedy} always outputs a feasible set. Moreover, it uses $O(nk)$ value and independence oracle queries.
\end{observation}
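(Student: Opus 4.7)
The plan is to verify the two claims separately, both by induction on the round index $i$ together with standard facts about matroid contraction and the greedy algorithm for finding a maximum-weight base.

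For feasibility, I would argue by induction that $S_i \in \cI$ for every $0 \le i \le k$. The base case $S_0 = \varnothing$ is immediate since $\cI$ is nonempty and downward closed. For the inductive step, assume $S_{i-1} \in \cI$. The set $M_i$ is chosen as a base of the contracted matroid $\cM / S_{i-1}$, and by the definition of contraction this means $M_i \cup S_{i-1} \in \cI$; in particular, $S_{i-1} + u_i \in \cI$ for every $u_i \in M_i$, so $S_i \in \cI$. Taking $i = k$ shows that the output is feasible. I would also note that this implicitly uses that a base of $\cM/S_{i-1}$ exists, which follows from $S_{i-1}$ being independent in $\cM$.

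For the query complexity, I would bound the cost of a single iteration. Computing $f(u \mid S_{i-1})$ for every $u \in \cN \setminus S_{i-1}$ takes at most $n$ value oracle queries (plus one to get $f(S_{i-1})$, which can in fact be cached from the previous iteration). A maximum-weight base $M_i$ of $\cM / S_{i-1}$ can then be found by the standard matroid greedy: sort the elements by weight and, in decreasing order, attempt to add each one to a running set, using one independence query per attempt. This costs $O(n)$ independence queries per iteration. Selecting $u_i$ and updating $S_i$ are free. Summing over the $k$ iterations yields $O(nk)$ queries of each type.

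The only subtlety I anticipate is in the bookkeeping for the independence queries when we implement the matroid greedy over $\cM/S_{i-1}$ using an oracle for $\cM$; this is routine because $T$ is independent in $\cM/S_{i-1}$ iff $T \cup S_{i-1}$ is independent in $\cM$, so each test translates to exactly one query to the $\cM$-independence oracle. Nothing else in the statement requires argument beyond these ingredients, so the proof should be very short.
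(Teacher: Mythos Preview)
Your proposal is correct and follows essentially the same approach as the paper: both argue feasibility via the definition of contraction (so that $S_{i-1}\cup M_i$ is independent, hence $S_i$ is) and bound the per-iteration cost by $O(n)$ value queries for the weights plus $O(n)$ independence queries for the matroid greedy, summed over $k$ rounds. Your write-up is slightly more explicit about the induction and the translation of independence queries from $\cM/S_{i-1}$ to $\cM$, but there is no substantive difference.
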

\begin{proof}
The first part of the observation follows immediately from the properties of a matroid. Specifically, it follows from the fact that for any independent set $S$ of size less than $k$ there must exist a non-empty set $M$ such that $S \cup M$ is a base, and moreover, for such a set $M$ any subset of $S \cup M$ is independent.

To see why the second part of the observation holds, we observe that the only line of Algorithm~\ref{alg:ResidualRandomGreedy} which requires access to the oracles is Line~\ref{line:M_construction}. This line can be viewed as having two parts. The first part defines a weight $f(u \mid S_{i - 1})$ for every element of $\cN \setminus S$ (which requires $O(n)$ value oracle queries), while the second part finds a maximum weight independent set in $\cM / S_{i - 1}$ subject to these weights (which requires $O(n)$ independence oralce queries when done using the greedy algorithm). Thus, we get that each execution of Line~\ref{line:M_construction} requires $O(n)$ oracle queries, and the observation follows since this line is executed $k$ times.
\end{proof}

In the rest of this section we analyze the approximation ratio of Algorithm~\ref{alg:ResidualRandomGreedy}. Let us denote by $OPT$ an arbitrary optimal solution, \ie, a set which maximizes $f$ among the independent sets of $\cM$. Since $f$ is monotone, we may assume that $OPT$ is a base of $\cM$. We need to construct, for every $0 \leq i \leq k$, a random set $OPT_i$ for which $S_i \cup OPT_i$ is a base. For the construction we use the following lemma from \cite{B69}, which can be found (with a different notation) as Corollary~39.12a in \cite{S03}.
\begin{lemma} \label{le:perfect_matching_two_bases}
If $A$ and $B$ are two bases of a matroid $\cM = (\cN, \cI)$, then there exists a one to one function $g : A \setminus B \rightarrow B \setminus A$ such that for every $u \in A \setminus B$, $(B + u) - g(u) \in \cI$.
\end{lemma}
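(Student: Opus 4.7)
My plan is to reduce the statement to Hall's marriage theorem. Let $H$ be the bipartite graph on vertex sets $A \setminus B$ and $B \setminus A$ in which $u \in A \setminus B$ is adjacent to $v \in B \setminus A$ precisely when $(B + u) - v \in \cI$. Since $A$ and $B$ are bases of the same matroid, $|A \setminus B| = |B \setminus A|$, so producing the desired one-to-one function $g$ is equivalent to exhibiting a perfect matching of $H$ that saturates $A \setminus B$. The entire proof thus boils down to verifying Hall's condition $|N_H(X)| \geq |X|$ for every $X \subseteq A \setminus B$.

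To analyze $N_H(X)$, I would first invoke the standard fact that, because $B$ is a base and $u \notin B$, the set $B + u$ contains a unique circuit $C(u, B)$, and $(B + u) - v \in \cI$ iff $v \in C(u, B)$. Hence the ``full'' set of admissible swap partners of $u$ inside $B$ is exactly $C(u, B) - u \subseteq B$. Writing $N'(X) := \bigcup_{u \in X} (C(u, B) - u) \subseteq B$, every element of $X$ lies in the closure of $N'(X)$ (through the circuit $C(u,B)$), while $N'(X)$ itself is independent as a subset of the base $B$.

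The core Hall-type computation compares two different evaluations of the same rank. On the one hand,
\[
	\rank(X \cup N'(X)) \;=\; \rank(N'(X)) \;=\; |N'(X)|,
\]
because $X$ sits in the closure of $N'(X)$ and $N'(X) \subseteq B$ is independent. On the other hand, $X \cup (N'(X) \cap A) \subseteq A$ is independent as a subset of the independent set $A$, so
\[
	|X| + |N'(X) \cap A| \;=\; \rank\bigl(X \cup (N'(X) \cap A)\bigr) \;\leq\; \rank(X \cup N'(X)) \;=\; |N'(X)|.
\]
Since $N'(X)$ is the disjoint union of $N'(X) \cap A$ and $N_H(X) = N'(X) \cap (B \setminus A)$, this rearranges to $|X| \leq |N_H(X)|$, completing the verification of Hall's condition and hence the lemma.

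The main obstacle I anticipate is precisely that the ``natural'' swap neighborhood $N'(X)$ lives in $B$ and may intersect $A \cap B$, whereas the graph $H$ only records neighbors in $B \setminus A$. A naive rank argument ignoring this distinction yields only $|X| \leq |N'(X)|$, which is too weak whenever $A$ and $B$ share elements. Splitting $N'(X)$ along $A$ and leveraging the independence of $A$ itself to ``charge off'' the $N'(X) \cap A$ term on both sides is what tightens the bound, and it is the one place in the proof where more than the generic circuit-elimination behavior of $B$ is used.
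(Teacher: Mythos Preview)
Your argument is correct. The reduction to Hall's theorem via the fundamental circuits $C(u,B)$ is the standard route to this base-exchange bijection, and your handling of the potential overlap $N'(X)\cap A$ by absorbing it into the independent set $X\cup(N'(X)\cap A)\subseteq A$ is exactly the right fix for the subtlety you flagged.

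As for the comparison: the paper does not prove this lemma at all. It is quoted as a known result due to Brualdi~\cite{B69}, with a pointer to Corollary~39.12a in Schrijver~\cite{S03}, and is used as a black box in constructing the sets $OPT_i$. So you have supplied a self-contained proof where the paper simply cites the literature; there is no ``paper's own proof'' to contrast with beyond noting that your Hall-based argument is essentially the classical one.
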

For $i = 0$, we define $OPT_0 = OPT$. For $i > 0$, $OPT_i$ is constructed recursively based on the algorithm's behavior. Assume that $OPT_{i - 1}$ is already constructed, and let $g_i\colon M_i \to OPT_{i - 1}$ be a one to one function mapping every element $u \in M_i$ to an element of $OPT_{i - 1}$ in such a way that $S_{i - 1} \cup OPT_{i - 1} - g_i(u) + u$ is a base. Observe that the existence of such a function follows immediately from Lemma~\ref{le:perfect_matching_two_bases} since both $S_{i-1} \cup OPT_{i-1}$ and $S_{i - 1} \cup M_i$ are bases of $\cM$. We now set $OPT_i = OPT_{i - 1} - g_i(u_i)$, which guarantees that $S_i \cup OPT_i$ is a base, as promised. For this construction to be useful, it is important that the choice of $g_i$ (among the possibly multiple functions obeying the required properties) is done independently of the random choice of $u_i$, which guarantees that $g_i(u_i)$ is a uniformly random sample from $OPT_{i - 1}$.

The next lemma proves a lower bound on the expected values of the sets we have constructed. The proof of this lemma is similar to the proof of Lemma~A.1 of~\cite{EDFK17}. Nevertheless, the current lemma manages to get a tighter bound than what was obtained by~\cite{EDFK17}.

\begin{lemma} \label{lem:value}
For every $0 \leq i \leq k$, $\bE[f(OPT_i)] \geq \left[1 - \left(\frac{i + 1}{k + 1}\right)^\gamma\right] \cdot f(OPT)$.
\end{lemma}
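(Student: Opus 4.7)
The plan is to exploit the marginal distribution of $OPT_i$ and then reduce the problem to a forward random-order argument that is amenable to $\gamma$-weak submodularity. By the observation made when constructing the $OPT_i$'s, the element $g_i(u_i)$ is uniform on $OPT_{i-1}$ conditional on the algorithm's history through the start of iteration $i$. A short induction then shows that $OPT_i$ is marginally distributed as a uniformly random size-$(k-i)$ subset of $OPT$, so by the standard equivalence between ``remove $i$ uniform random elements'' and ``pick a uniform random permutation $\pi$ of $OPT$ and keep its last $k-i$ entries,'' we have $\bE[f(OPT_i)] = \bE[f(Q_{k-i})]$ where $Q_j := \{\pi(1),\dots,\pi(j)\}$.

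The key step is then a recursion for $a_j := \bE[f(Q_j)]$. Conditional on $Q_{j-1}$, the element $\pi(j)$ is uniform over the $k-j+1$ elements of $OPT \setminus Q_{j-1}$, so $\gamma$-weak submodularity applied with $A = Q_{j-1}$ and $B = OPT \setminus Q_{j-1}$ (noting $A \cup B = OPT \in \cI$, which also covers the restricted variant of the definition) gives
\[
\bE\bigl[f(\pi(j) \mid Q_{j-1}) \,\bigm|\, Q_{j-1}\bigr]
= \frac{1}{k-j+1}\sum_{u \in OPT \setminus Q_{j-1}} f(u \mid Q_{j-1})
\geq \frac{\gamma}{k-j+1}\bigl[f(OPT) - f(Q_{j-1})\bigr].
\]
Taking the outer expectation and setting $b_j := f(OPT) - a_j \geq 0$ (non-negativity from monotonicity) produces the clean recursion $b_j \leq (1 - \gamma/(k-j+1))\,b_{j-1}$.

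Unrolling from $b_0 = f(OPT) - f(\emptyset) \leq f(OPT)$ (using $f \geq 0$) yields $b_{k-i} \leq f(OPT)\cdot \prod_{m=i+1}^{k}(1 - \gamma/m)$. The remaining step is the analytic estimate of this product: combining $1 - x \leq e^{-x}$ with the harmonic lower bound $\sum_{m=i+1}^{k} 1/m \geq \ln((k+1)/(i+1))$ gives $\prod_{m=i+1}^{k}(1 - \gamma/m) \leq ((i+1)/(k+1))^\gamma$, which translates to the desired $a_{k-i} \geq [1 - ((i+1)/(k+1))^\gamma]\,f(OPT)$. The only delicate point is keeping this product bound sharp; any cruder estimate would degrade the exponent on $(i+1)/(k+1)$ and, downstream, the $(1+1/\gamma)^{-2}$ factor promised by Theorem~\ref{thm:main_result}. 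Everything else is routine conditioning and telescoping.
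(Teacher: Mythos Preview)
Your proof is correct and is essentially the same argument as the paper's: both recognize that $OPT_i$ is a uniformly random $(k-i)$-subset of $OPT$, apply $\gamma$-weak submodularity to the expected marginal when adding one random element, telescope the resulting recursion, and finish with $1-x\le e^{-x}$ together with $\sum_{m=i+1}^{k}1/m\ge\ln((k+1)/(i+1))$. The only cosmetic difference is that you phrase the recursion via the substitution $b_j=f(OPT)-\bE[f(Q_j)]$ and a forward permutation coupling, whereas the paper carries the exponential form $1-e^{-\gamma\sum j^{-1}}$ through a downward induction on $i$.
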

\begin{proof}
We first prove by induction that, for every $0 \leq i \leq k$,
\begin{equation} \label{eq:intermidiate}
	\bE[f(OPT_i)] \geq  [1 - e^{-\gamma \cdot \sum_{j = i + 1}^k j^{-1}}] \cdot f(OPT) \enspace.
\end{equation}
For $i = k$, Inequality~\eqref{eq:intermidiate} follows from the non-negativity of $f$ since
\[
	f(OPT_k)
	\geq
	0
	=
	[1 - e^{-\gamma \cdot 0}] \cdot f(OPT)
	=
	[1 - e^{-\gamma \cdot \sum_{j = k + 1}^k j^{-1}}] \cdot f(OPT)
	\enspace.
\]

Assume now that Inequality~\eqref{eq:intermidiate} holds for some $0 < i + 1 \leq k$, and let us prove it holds also for $i$. Observe that $OPT_i$ is a uniformly random subset of $OPT$ of size $k - i$, and $OPT_{i + 1}$ is a uniformly random subset of $OPT$ of size $k - i - 1$. Thus, we can think of $OPT_i$ as obtained from $OPT_{i + 1}$ by adding a uniformly random element of $OPT \setminus OPT_{i + 1}$. Taking this point of view, we get, for every set $S \subseteq OPT$ of size $k - i - 1$,
\begin{align*}
	\bE[f(OPT_i) \mid OPT_{i + 1} = S&]
	=
	f(S) + \frac{\sum_{u \in OPT \setminus S} f(u \mid S)}{|OPT \setminus S|}
	=
	f(S) + \frac{1}{i + 1} \cdot \sum_{u \in OPT \setminus S} f(u \mid S)\\
	\geq{} &
	f(S) + \frac{\gamma}{i + 1} \cdot f(OPT \setminus S \mid S)
	=
	\left(1 - \frac{\gamma}{i + 1}\right) \cdot f(S) + \frac{\gamma}{i + 1} \cdot f(OPT)
	\enspace,
\end{align*}
where the inequality holds by the $\gamma$-weak submodularity of $f$. Taking expectation over the set $OPT_{i + 1}$, the last inequality becomes
\begin{align*}
	\bE[f(OPT_i)]
	\geq{} &
	\left(1 - \frac{\gamma}{i + 1}\right) \cdot \bE[f(OPT_{i + 1})] + \frac{\gamma}{i + 1} \cdot f(OPT)\\
	\geq{} &
	\left(1 - \frac{\gamma}{i + 1}\right) \cdot [1 - e^{-\gamma \cdot \sum_{j = i + 2}^k j^{-1}}] \cdot f(OPT) + \frac{\gamma}{i + 1} \cdot f(OPT)\\
	={} &
	f(OPT) - \left(1 - \frac{\gamma}{i + 1}\right) \cdot e^{-\gamma \cdot \sum_{j = i + 2}^k j^{-1}} \cdot f(OPT)\\
	\geq{} &
	f(OPT) - e^{-\gamma \cdot \sum_{j = i + 1}^k j^{-1}} \cdot f(OPT)
	=
	[1 - e^{-\gamma \cdot \sum_{j = i + 1}^k j^{-1}}] \cdot f(OPT)
	\enspace,
\end{align*}
where the second inequality follows by the induction hypothesis, and the last inequality follows by the inequality $1 - x \leq e^{-x}$ (which holds for every $x$). This completes the proof of Inequality~\eqref{eq:intermidiate}. To see why the lemma follows from this inequality, we observe that
\[
	\sum_{j = i + 1}^k \frac{1}{j}
	\geq
	\int_{i + 1}^{k + 1} \frac{dx}{x}
	=
	\left.\ln x \right|_{i + 1}^{k + 1}
	=
	\ln\left(\frac{k + 1}{i + 1}\right)
	\enspace,
\]
which implies (by Inequality~\eqref{eq:intermidiate})
\[
	\bE[f(OPT_i)]
	\geq
	[1 - e^{-\gamma \cdot \sum_{j = i + 1}^k j^{-1}}] \cdot f(OPT)
	\geq
	\left[1 - \left(\frac{i + 1}{k + 1}\right)^\gamma\right] \cdot f(OPT)
	\enspace.
	\qedhere
\]
\end{proof}

The next observation gives a lower bound on the increase in $\bE[f(S_i)]$ as a function of $i$. Note that the bound given by this observation uses the sets $\{OPT_i\}_{i = 0}^k$ that we have constructed above.

\begin{observation} \label{obs:S_gain}
For every $1 \leq i \leq k$, $\bE[f(S_i)] \geq \bE[f(S_{i - 1})] + \gamma \cdot \frac{\bE[f(OPT_{i - 1} \cup S_{i - 1})] - \bE[f(S_{i - 1})]}{k - i + 1}$.
\end{observation}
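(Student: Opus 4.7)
The plan is to condition on the state of the algorithm at the start of iteration $i$ (that is, on $S_{i-1}$ and $OPT_{i-1}$) and evaluate the conditional expected gain $\bE[f(S_i) - f(S_{i-1}) \mid S_{i-1}, OPT_{i-1}]$, then take outer expectations. Since $u_i$ is chosen uniformly at random from $M_i$ and $|M_i| = k - i + 1$ (because $M_i$ is a base of $\cM/S_{i-1}$, and the rank of this contracted matroid is $k-(i-1)$), the conditional expected gain equals
\[
    \bE[f(u_i \mid S_{i-1}) \mid S_{i-1}] = \frac{1}{k - i + 1}\sum_{u \in M_i} f(u \mid S_{i-1}).
\]

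The key observation I would use next is that $OPT_{i-1}$ itself is a base of $\cM / S_{i-1}$. Indeed, the recursive construction guarantees that $OPT_{i-1}$ is disjoint from $S_{i-1}$ and $S_{i-1} \cup OPT_{i-1}$ is a base of $\cM$, which together are equivalent to $OPT_{i-1}$ being a base of the contraction. Since Line~\ref{line:M_construction} picks $M_i$ to maximize $\sum_{u \in M_i} f(u \mid S_{i-1})$ among bases of $\cM / S_{i-1}$, we get
\[
    \sum_{u \in M_i} f(u \mid S_{i-1}) \;\geq\; \sum_{u \in OPT_{i-1}} f(u \mid S_{i-1}).
\]

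Now I would apply $\gamma$-weak submodularity with $A = S_{i-1}$ and $B = OPT_{i-1}$ to lower bound the right-hand side by
\[
    \gamma \cdot f(OPT_{i-1} \mid S_{i-1}) \;=\; \gamma \cdot \bigl[f(S_{i-1} \cup OPT_{i-1}) - f(S_{i-1})\bigr].
\]
Combining the three displays gives
\[
    \bE[f(S_i) - f(S_{i-1}) \mid S_{i-1}, OPT_{i-1}] \;\geq\; \frac{\gamma}{k-i+1} \cdot \bigl[f(S_{i-1} \cup OPT_{i-1}) - f(S_{i-1})\bigr],
\]
and taking expectation over $S_{i-1}$ and $OPT_{i-1}$ yields the claim.

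There is no real obstacle in this proof: the only non-trivial step is recognizing that $OPT_{i-1}$ is itself a feasible base for the greedy choice of $M_i$, which is the precise reason the construction in Lemma~\ref{le:perfect_matching_two_bases} was set up so that $S_{i-1} \cup OPT_{i-1}$ remains a base throughout. One technical point worth verifying is that the choice of $g_i$ is made independently of $u_i$, which ensures the conditioning on $OPT_{i-1}$ does not distort the uniform distribution of $u_i$ over $M_i$; this was already established in the paragraph preceding the lemma.
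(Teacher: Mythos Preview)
Your proof is correct and follows essentially the same route as the paper: condition on the history up to iteration $i-1$, use that $OPT_{i-1}$ is a feasible base of $\cM/S_{i-1}$ so that the greedy choice of $M_i$ dominates it, apply $\gamma$-weak submodularity to pass from $\sum_{u \in OPT_{i-1}} f(u \mid S_{i-1})$ to $\gamma \cdot f(OPT_{i-1} \mid S_{i-1})$, and then take the outer expectation. The only cosmetic difference is that the paper applies weak submodularity before invoking the optimality of $M_i$, whereas you do the two steps in the opposite order.
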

\begin{proof}
Fix $1 \leq i \leq k$, and let $A_{i - 1}$ be an arbitrary event fixing all the random decisions of Algorithm~\ref{alg:ResidualRandomGreedy} up to iteration $i - 1$ (including). All the probabilities, expectations and random quantities in the first part of this proof are implicitly conditioned on $A_{i - 1}$. The $\gamma$-weak submodularity of $f$ implies
\[
	\sum_{u \in OPT_{i - 1}} f(u \mid S_{i - 1}) \geq \gamma \cdot f(OPT_{i - 1} \mid S_{i - 1})
	\enspace.
\]
Since $OPT_{i - 1}$ is one possible candidate to be $M_i$, we get
\[
	\sum_{u \in M_i} f(u \mid S_{i - 1})
	\geq
	\sum_{u \in OPT_{i - 1}} f(u \mid S_{i - 1})
	\geq
	\gamma \cdot f(OPT_{i - 1} \mid S_{i - 1})
	\enspace.
\]

Algorithm~\ref{alg:ResidualRandomGreedy} gets $S_i$ by adding a uniformly random element $u_i \in M_i$ to the set $S_{i - 1}$. Since the size of $M_i$ is $k - i + 1$, this implies
\begin{align*}
	\bE[f(S_i)]
	={} &
	f(S_{i - 1}) + \bE[f(u_i \mid S_{i - 1})]
	=
	f(S_{i - 1}) + \frac{1}{k - i + 1} \cdot \sum_{u \in M_i} f(u \mid S_{i - 1})\\
	\geq{} &
	f(S_{i - 1}) + \frac{\gamma}{k - i + 1} \cdot f(OPT_{i - 1} \mid S_{i - 1})
	=
	f(S_{i - 1}) + \gamma \cdot \frac{f(OPT_{i - 1} \cup S_{i - 1}) - f(S_{i - 1})}{k - i + 1}
	\enspace.
\end{align*}
Recall that the last inequality is implicitly conditioned on the event $A_{i - 1}$. The observation now follows by taking the expectation of both sides of this inequality over all possible such events.
\end{proof}

Combining the last lemma and observation, we get the next corollary.

\begin{corollary} \label{cor:opt_plugged}
For every $1 \leq i \leq k$, $\bE[f(S_i)] \geq \bE[f(S_{i - 1})] + \gamma \cdot \frac{\{1 - [i/(k + 1)]^\gamma\} \cdot f(OPT) - \bE[f(S_{i - 1})]}{k - i + 1}$.
\end{corollary}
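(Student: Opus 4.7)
The plan is to combine Lemma~\ref{lem:value} and Observation~\ref{obs:S_gain} in the most direct way. Observation~\ref{obs:S_gain} already gives a bound of the form
\[
\bE[f(S_i)] \geq \bE[f(S_{i - 1})] + \gamma \cdot \frac{\bE[f(OPT_{i - 1} \cup S_{i - 1})] - \bE[f(S_{i - 1})]}{k - i + 1},
\]
so the entire task is to lower-bound the quantity $\bE[f(OPT_{i-1} \cup S_{i-1})]$ by $\{1 - [i/(k+1)]^\gamma\} \cdot f(OPT)$, and then substitute.

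The first step I would carry out is to invoke monotonicity of $f$: since $OPT_{i-1} \subseteq OPT_{i-1} \cup S_{i-1}$, we have $f(OPT_{i-1} \cup S_{i-1}) \geq f(OPT_{i-1})$ pointwise, and therefore $\bE[f(OPT_{i-1} \cup S_{i-1})] \geq \bE[f(OPT_{i-1})]$. The second step is to apply Lemma~\ref{lem:value} at index $i-1$, which yields $\bE[f(OPT_{i-1})] \geq \{1 - [i/(k+1)]^\gamma\} \cdot f(OPT)$. Chaining these two inequalities produces the desired lower bound on $\bE[f(OPT_{i-1} \cup S_{i-1})]$.

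The final step is to substitute this lower bound into the inequality from Observation~\ref{obs:S_gain}. Here one must be careful with the direction of the inequality: the coefficient $\gamma/(k - i + 1)$ is non-negative, so replacing $\bE[f(OPT_{i-1} \cup S_{i-1})]$ in the numerator by the smaller quantity $\{1 - [i/(k+1)]^\gamma\} \cdot f(OPT)$ preserves the ``$\geq$'' direction, giving exactly the statement of the corollary. There is no real obstacle in this argument; the only thing worth double-checking is that the index shift between Lemma~\ref{lem:value} (indexed at $i$) and the corollary (which invokes the lemma at $i-1$) is handled correctly, and that monotonicity really is available — which it is, by our standing assumption on $f$.
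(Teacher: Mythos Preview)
Your proposal is correct and follows exactly the same three-step chain as the paper's own proof: apply Observation~\ref{obs:S_gain}, then monotonicity of $f$ to replace $\bE[f(OPT_{i-1}\cup S_{i-1})]$ by $\bE[f(OPT_{i-1})]$, then Lemma~\ref{lem:value} at index $i-1$ to get the final bound. The index shift and the sign of the coefficient are handled correctly.
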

\begin{proof}
Note that
\begin{align*}
	\bE[f(S_i)]
	\geq{} &
	\bE[f(S_{i - 1})] + \gamma \cdot \frac{\bE[f(OPT_{i - 1} \cup S_{i - 1})] - \bE[f(S_{i - 1})]}{k - i + 1}\\
	\geq{} &
	\bE[f(S_{i - 1})] + \gamma \cdot \frac{\bE[f(OPT_{i - 1})] - \bE[f(S_{i - 1})]}{k - i + 1}\\
	\geq{} &
	\bE[f(S_{i - 1})] + \gamma \cdot \frac{\{1 - [i/(k + 1)]^\gamma\} \cdot f(OPT) - \bE[f(S_{i - 1})]}{k - i + 1}
	\enspace,
\end{align*}
where the first inequality follows by Observation~\ref{obs:S_gain}, the second inequality holds by the monotonicity of $f$, and the last inequality follows by Lemma~\ref{lem:value}.
\end{proof}

We are now ready to prove an approximation ratio for Algorithm~\ref{alg:ResidualRandomGreedy}. Specifically, the next theorem proves that the approximation ratio of Algorithm~\ref{alg:ResidualRandomGreedy} can be smaller than the approximation ratio guaranteed by Theorem~\ref{thm:main_result} by at most a low order term of $O(k^{-1})$. In Appendix~\ref{app:low_order_term}, we show that this low order term can be dropped, which implies Theorem~\ref{thm:main_result}.
\begin{theorem} \label{thm:main_result_weak}
The approximation ratio of Algorithm~\ref{alg:ResidualRandomGreedy} is at least $(1 + 1/\gamma)^{-2} - O(k^{-1})$.
\end{theorem}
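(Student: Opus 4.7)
The plan is to apply summation-by-parts (Abel summation) to the recursion in Corollary~\ref{cor:opt_plugged}, then combine with the monotonicity of $f$ and a standard integral comparison. Concretely, I first multiply both sides of the Corollary inequality by $(k-i+1)$ to get
\[
(k-i+1)\bigl(\bE[f(S_i)] - \bE[f(S_{i-1})]\bigr) \geq \gamma\bigl(\{1-[i/(k+1)]^\gamma\}f(OPT) - \bE[f(S_{i-1})]\bigr),
\]
and sum over $i=1,\ldots,k$. The left-hand side is an Abel sum: since the coefficient $k-i+1$ decreases by exactly one at each step and $\bE[f(S_0)]=0$, a direct rearrangement gives $\sum_{i=1}^k (k-i+1)(\bE[f(S_i)] - \bE[f(S_{i-1})]) = \sum_{i=1}^k \bE[f(S_i)]$. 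On the right-hand side, $\sum_{i=1}^k \bE[f(S_{i-1})] = \sum_{i=1}^{k-1}\bE[f(S_i)]$ (using $\bE[f(S_0)]=0$). Moving this term across produces
\[
\bE[f(S_k)] + (1+\gamma)\sum_{i=1}^{k-1}\bE[f(S_i)] \geq \gamma f(OPT)\sum_{i=1}^k\Bigl[1 - \Bigl(\tfrac{i}{k+1}\Bigr)^\gamma\Bigr].
\]

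Next, I invoke monotonicity of $f$: since $S_i \subseteq S_k$ for all $i \leq k$, we have $\bE[f(S_i)] \leq \bE[f(S_k)]$. This replaces the left-hand side by the upper bound $[1 + (1+\gamma)(k-1)]\bE[f(S_k)]$, so
\[
\bE[f(S_k)] \geq \frac{\gamma f(OPT)}{1 + (1+\gamma)(k-1)}\sum_{i=1}^k\Bigl[1 - \Bigl(\tfrac{i}{k+1}\Bigr)^\gamma\Bigr].
\]
To bound the remaining sum, I use the standard integral comparison: since $x^\gamma$ is increasing, $\sum_{i=1}^k i^\gamma \leq \int_1^{k+1} x^\gamma\,dx = [(k+1)^{\gamma+1}-1]/(\gamma+1) \leq (k+1)^{\gamma+1}/(\gamma+1)$, which gives
\[
\sum_{i=1}^k\Bigl[1 - \Bigl(\tfrac{i}{k+1}\Bigr)^\gamma\Bigr] \geq k - \frac{k+1}{\gamma+1} = \frac{k\gamma - 1}{\gamma+1}.
\]

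Plugging this in yields $\bE[f(S_k)]/f(OPT) \geq \gamma(k\gamma-1)/[(1+\gamma)(1 + (1+\gamma)(k-1))]$. For large $k$ the numerator is $k\gamma^2(1 - O(1/k))$ and the denominator is $k(1+\gamma)^2(1 - O(1/k))$, so this simplifies to $\gamma^2/(1+\gamma)^2 - O(1/k) = (1+1/\gamma)^{-2} - O(1/k)$, as needed. The main obstacle is the Abel summation in the first step: one must carefully track the boundary terms and confirm that the coefficients of each $\bE[f(S_i)]$ on the left-hand side collapse to exactly $1$. Once this is set up correctly, the remainder of the argument is routine algebra and a textbook integral estimate.
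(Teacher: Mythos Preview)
Your proof is correct and uses the same key ingredients as the paper---Corollary~\ref{cor:opt_plugged}, monotonicity of $f$ to replace each $\bE[f(S_i)]$ by $\bE[f(S_k)]$, and the same integral estimate for $\sum_{i=1}^k[1-(i/(k+1))^\gamma]$---but the execution is different. The paper proves by induction on $i$ that
\[
\bE[f(S_i)] \geq \frac{\gamma}{k}\Bigl(\textstyle\sum_{j=1}^i\{1-[j/(k+1)]^\gamma\}f(OPT) - i\,\bE[f(S_k)]\Bigr),
\]
splitting into two cases at each step depending on the sign of $\{1-[i/(k+1)]^\gamma\}f(OPT)-\bE[f(S_k)]$; plugging $i=k$ and solving for $\bE[f(S_k)]$ then gives the bound. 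You instead multiply the Corollary inequality by $k-i+1$ and sum directly, using Abel summation to collapse the left-hand side. Your route avoids the inductive case analysis entirely and is a bit more streamlined; it even yields a marginally sharper denominator ($(1+\gamma)k-\gamma$ rather than $(1+\gamma)k$), though of course both give the same result up to $O(1/k)$.

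One small caveat: you invoke $\bE[f(S_0)]=0$, but the paper only assumes $f$ is non-negative, so $f(\varnothing)$ may be positive. This does not harm your argument: retaining the $f(\varnothing)$ terms, the Abel sum equals $\sum_{i=1}^k\bE[f(S_i)] - k f(\varnothing)$ and the right-hand side picks up an extra $-\gamma f(\varnothing)$, so after rearranging you gain an additional nonnegative term $(k-\gamma)f(\varnothing)$ on the right, and the desired inequality still holds.
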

\begin{proof}
Let us first prove by induction that, for every $0 \leq i \leq k$,
\begin{equation} \label{eq:theorem_induction}
	\bE[f(S_i)]
	\geq
	\gamma \cdot \frac{\sum_{j = 1}^i \{1 - [j/(k + 1)]^\gamma\} \cdot f(OPT) - i \cdot \bE[f(S_k)]}{k} \enspace.
\end{equation}
For $i = 0$, Inequality~\eqref{eq:theorem_induction} follows from the non-negativity of $f$ since
\[
	\bE[f(S_0)]
	\geq
	0
	=
	\gamma \cdot \frac{0 \cdot f(OPT) - 0}{k}
	=
	\gamma \cdot \frac{\sum_{j = 1}^0 \{1 - [j/(k + 1)]^\gamma\} \cdot f(OPT) - 0 \cdot \bE[f(S_k)]}{k}
	\enspace.
\]
Assume now that Inequality~\eqref{eq:theorem_induction} holds for some $0 \leq i - 1 < k$, and let us prove that it holds for $i$ as well. There are two cases to consider. If $\{1 - [i/(k + 1)]^\gamma\} \cdot f(OPT) \leq \bE[f(S_k)]$, then the monotonicity of $f$ and the fact that $S_{i - 1}$ is a subset of $S_i$ guarantee together that
\begin{align*}
	\bE[f(S_i)]
	\geq{} &
	\bE[f(S_{i - 1})]
	\geq
	\gamma \cdot \frac{\sum_{j = 1}^{i - 1} \{1 - [j/(k + 1)]^\gamma\} \cdot f(OPT) - (i - 1) \cdot \bE[f(S_k)]}{k}\\
	\geq{} &
	\gamma \cdot \frac{\sum_{j = 1}^i \{1 - [j/(k + 1)]^\gamma\} \cdot f(OPT) - i \cdot \bE[f(S_k)]}{k}
	\enspace,
\end{align*}
where the second inequality follows by the induction hypothesis. Thus, it remains to consider the case that $\{1 - [i/(k + 1)]^\gamma\} \cdot f(OPT) \geq \bE[f(S_k)]$. By Corollary~\ref{cor:opt_plugged}, we get in this case
\begin{align*}
	\bE[f(S_i)] - \bE[f(S_{i - 1})]
	\geq{} &
	\gamma \cdot \frac{\{1 - [i/(k + 1)]^\gamma\} \cdot f(OPT) - \bE[f(S_{i - 1})]}{k - i + 1}\\
	\geq{} &
	\gamma \cdot \frac{\{1 - [i/(k + 1)]^\gamma\} \cdot f(OPT) - \bE[f(S_k)]}{k - i + 1}\\
	\geq{} &
	\gamma \cdot \frac{\{1 - [i/(k + 1)]^\gamma\} \cdot f(OPT) - \bE[f(S_k)]}{k}
	\enspace,
\end{align*}
where the second inequality follows by the monotonicity of $f$ since $S_{i - 1}$ is a subset of $S_k$. Adding the last inequality to the induction hypothesis proves that Inequality~\eqref{eq:theorem_induction} holds for $i$, and thus, completes the proof by induction of Inequality~\eqref{eq:theorem_induction}.

Let us now explain why the theorem follows from Inequality~\eqref{eq:theorem_induction}. Plugging $i = k$ into this inequality yields
\[
	\bE[f(S_k)]
	\geq
	\gamma \cdot \frac{\sum_{j = 1}^k \{1 - [j/(k + 1)]^\gamma\} \cdot f(OPT) - k \cdot \bE[f(S_k)]}{k}
	\enspace,
\]
which implies, by extracting $\bE[f(S_k)]$,
\begin{align*}
	\bE[f(S_k)]
	\geq{} &
	\frac{\gamma \cdot f(OPT)}{(1 + \gamma)k} \cdot \sum_{j = 1}^k \left[1 - \left(\frac{j}{k + 1}\right)^\gamma\right]
	\geq
	\frac{\gamma \cdot f(OPT)}{(1 + \gamma)k} \cdot \int_1^{k + 1} \left[1 - \left(\frac{x}{k + 1}\right)^\gamma\right] dx\\
	={} &
	\frac{\gamma \cdot f(OPT)}{(1 + \gamma)k} \cdot \left[x - \frac{k + 1}{1 + \gamma} \cdot \left(\frac{x}{k + 1}\right)^{1 + \gamma} \right]_1^{k + 1}
	\geq
	\frac{\gamma \cdot f(OPT)}{(1 + \gamma)k} \cdot \left[k - \frac{k + 1}{1 + \gamma} \right]\\
	={} &
	\frac{\gamma \cdot f(OPT)}{(1 + \gamma)k} \cdot \left[\frac{\gamma k}{1 + \gamma} - \frac{1}{1 + \gamma} \right]
	\geq
	\left[\left(\frac{\gamma}{1 + \gamma}\right)^2 - \frac{1}{k} \right] \cdot f(OPT)
	\enspace.
	\qedhere
\end{align*}
\end{proof}

\section{Experiments}

\label{sec:experiment}

We have conducted four sets of experiments. The first set studies linear regression on synthetic data (\cref{sub:linear_regression}), and the other three sets correspond to real-world application scenarios and use real data. The application scenarios studied by these sets are video summarization (\cref{sub:vidsum}), splice site detection (\cref{sub:splice}) and black-box interpretation for images (\cref{sub:interp}). 

\subsection{Linear Regression} \label{sub:linear_regression}


In this set of experiments, we are given an $ n\times p $ matrix $\bm{X}$ and a vector $\bm{y} \in \bR^n$ which is a noisy version of the product of the matrix $\bm{X}$ and an unknown vector $\bm{\beta} \in \bR^p$. More formally, $\bm{y} \triangleq \bm{X} \bm{\beta} + \bm{\eps} $, where the coefficients of noise vector $\bm{\eps}$ are i.i.d.\ standard Gaussian random variables. In general, people are interested in the problem of, given such a matrix $\bm{X}$ and a vector $\bm{y}$, recovering $\beta$ under the assumption that it is sparse in some sense. In the current set of experiment, we call a vector $\bm{\beta}$ sparse if and only if its support $\supp(\bm{\beta})$ is independent in some input matroid $M$. In other words, we want to find among the vectors whose support is independent in $M$, the vector $\bm{\beta}$ which is the most likely to be the vector which has been used to generate $\bm{y}$.

The log-likelihood function of this problem for vectors is given by
\[ l(\bm{\beta}) = - \lVert \bm{y} - \bm{X\beta} \rVert^2 +C \enspace, \]
where $ C $ is a constant, and this yields the following log-likelihood function for support vectors
\[ g(S) = \max_{\supp(\bm{\beta}) \subseteq S } l(\bm{\beta}) = - \lVert  \bm{y} - \bm{X}_S (\bm{X}_S^T \bm{X}_S)^{-1} \bm{X}_S^T \bm{y} \rVert^2 +C \enspace,\]
which was shown to be weakly submodular by~\cite{elenberg2016restricted}.
Thus, our objective is to find a set $S$ which is independent in $M$ and approximately maximizes this weakly submodular function (given such a set $S$, one calculate the vector $\bm{\beta}$ that we look for). Towards this goal, we have applied {\RRG} to the matroid $M$ and the normalized log-likelihood function $ f(S) \triangleq g(S) - g(\varnothing) $ (we do not apply \Alg directly to the log-likelihood function $g$ since the last function is not guaranteed to be non-negative). We then compare the performance of \Alg on this optimization problem with the following baselines.
\begin{itemize}
	\item \AlgRand. The \AlgRand algorithm samples an independent set in an iterative manner. Throughout its execution, \AlgRand maintains an independent set $ S $, which is originally initialized to be the empty set. In each iteration, {\AlgRand} adds to $S$ a uniformly random element from the set of elements in $\cN$ whose addition to $S$ keeps $S$ independent. This process continues until $S$ becomes a base, at which point no more elements can be added to $S$. 
	\item \AlgGreedy. Like the previous algorithm, the \AlgGreedy algorithm maintains an independent set $ S $, which is originally initialized to be the empty set and grows iteratively. In each iteration, {\AlgGreedy} adds to $S$ the element with the largest marginal contribution (with respect to the objective function) among the elements of $\cN$ whose addition to $S$ keeps $S$ independent. Once more, the process terminates when $S$ becomes a base.
\end{itemize}

Before we present the results obtained by this set of experiments, we would like to explain the way we used to generate the inputs for the experiments. We chose $n = 100$ and $p = 200$, and constructed each row of the $ n\times p $ matrix $ \bm{X} $ independently according to an autoregressive (AR) process with $ \alpha = 0.5 $ and noise variance $ \sigma^2 = 10 $ (in this generation process, each entry of the row is a function of the last few entries appearing before it in the row and a few random bits). The support of the vector $\bm{\beta}$ was chosen randomly using the above mentioned algorithm $\AlgRand$ (notice that this algorithm does not use the objective function, and thus, can be understood as a way to sample an independent set from a matroid), and each non-zero value of $\bm{\beta}$ was assigned a uniformly random value from the set $\{-1, 1\}$. It remains to explain the way we have used to construct the matroid $M$ itself, which differs between two experiments we conducted.

In one experiment we have used graphic matroid $M$. To generate the graph underlying this graphic matroid, we started with an empty graph over $n$ vertices and added to it $p$ random edges, where each edge was chosen independently and connected a uniformly random pair of distinct vertices. In the other experiment we have used a partition matroid $M$ with $10$ partitions, which we denote by $B_1, B_2, \dotsc, B_{10}$. To generate this partition matroid we used a few steps. First, we uniformly sampled a random distribution out of the set of all possible distributions over the $10$ partitions (\ie, the sampled distribution is a uniformly random point from the standard 9-simplex).
Then, we created $p$ elements, and assigned each one of them to 
one of the partitions according to the above mentioned distribution.
 Finally, for every partition $B_i$, we sampled its capacity---\ie, the maximum number of elements of this partition that can appear in an independent set---from the binomial distribution $ B(|B_i|, 0.25) $. 

The results of the two experiments are illustrated in~\cref{fig:linear_regression}. The plots in this figure show how the normalized log-likelihood varies as the algorithms select more elements (also called ``features'') under the constraints corresponding to the above matroids. In both plots, the black line denotes the normalized log-likelihood achieved by the ground truth (\ie, the vector $\bm{\beta}$ used to generate $\bm{y}$). We observe that \Alg and \AlgGreedy yield comparable performance and both outperform \AlgRand. In particular, when they terminate, the normalized likelihoods attained by \Alg and \AlgGreedy are almost equal.

\begin{figure*}[htb]
	\centering
	\subfigure[Under a graphic matroid constraint  \label{fig:linear_matroid}]{
		\includegraphics[width=0.47\linewidth]{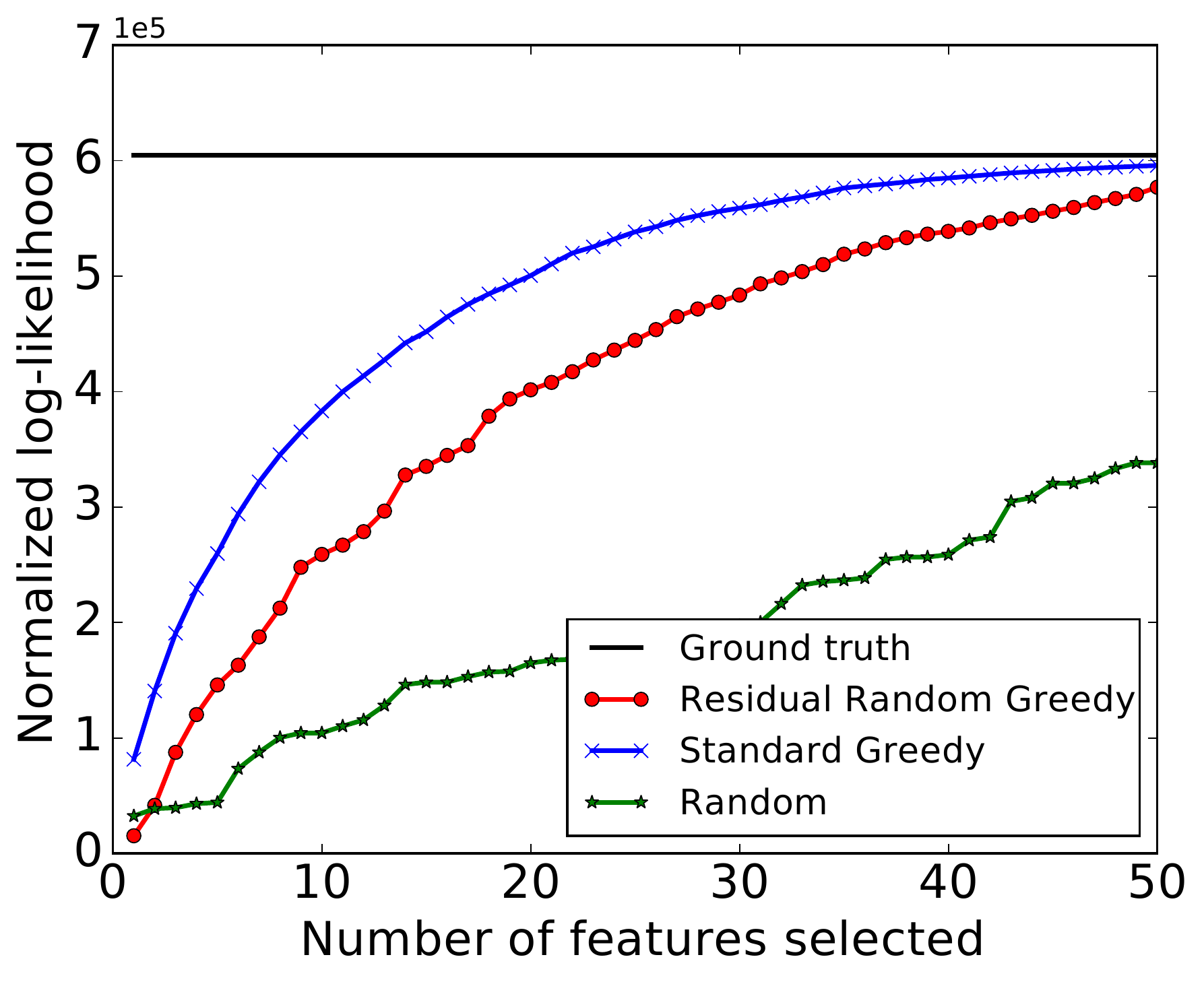}
	}\subfigure[Under a partition matroid constraint \label{fig:partition_matroid}]{
		\includegraphics[width=0.49\linewidth]{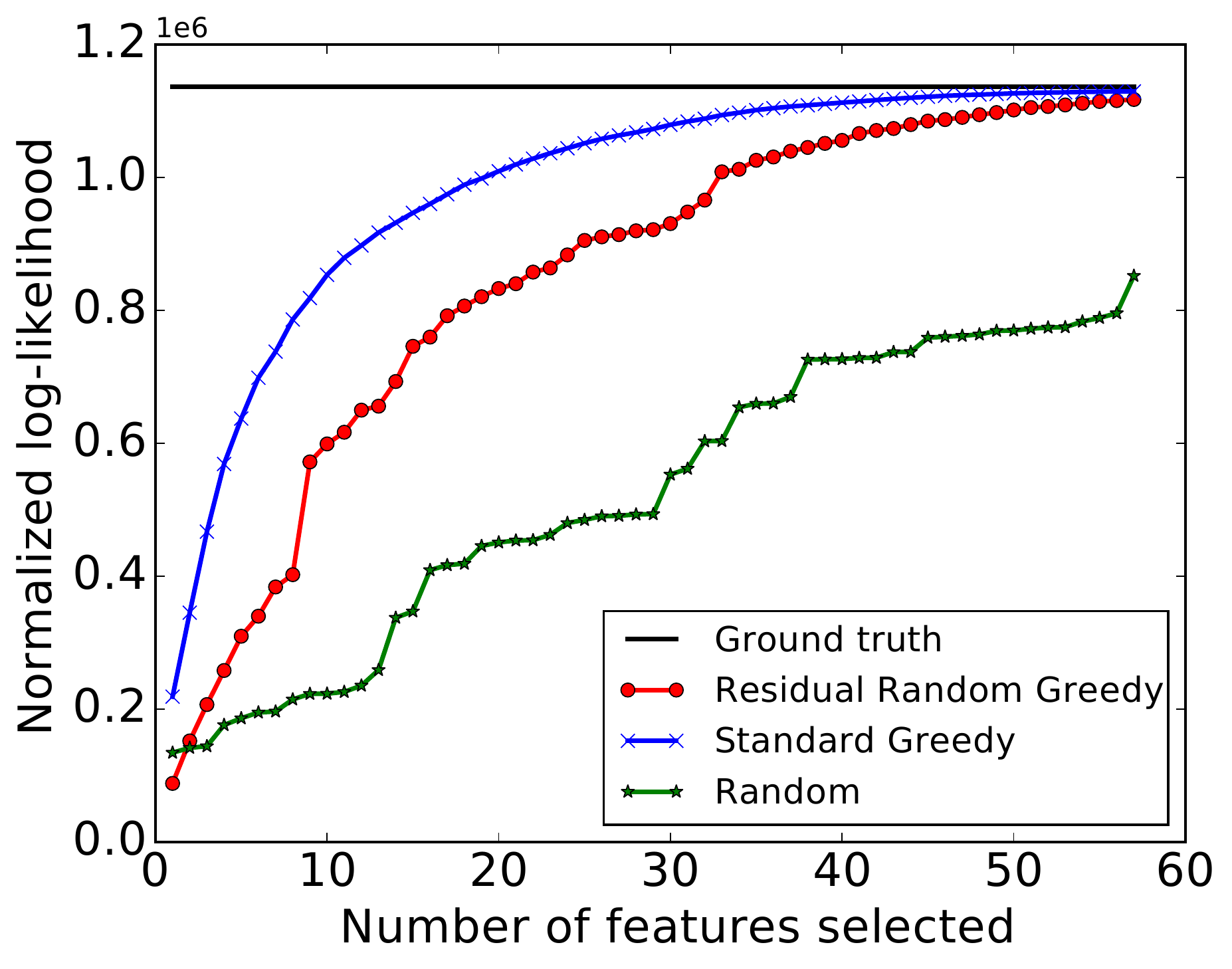}
	}
	%
	%
	\caption{Normalized log-likelihood vs.\ the number of features selected in linear regression. } \label{fig:linear_regression}
\end{figure*}

\subsection{Video Summarization}
\label{sub:vidsum}

In this application our objective is to pick a few frames from a video which summarize it (in some sense). One can formalize the problem of selecting such a summary as selecting a set of frames maximizing the Determinantal Point Process (DPP) objective function, which is a computationally efficient tool that favors subsets of a ground set of items with higher diversity~\cite{kulesza2012determinantal}. More formally, given an $ n $ frames video, we have represented each frame by a $ p $-dimensional  vector. Let $ X\in \mathbb{R}^{n\times n} $ be the Gramian matrix of the $ n $ resulting vectors and the Gaussian kernel; \ie, $ X_{ij} $ is the value of the Gaussian kernel between the $ i $-th and $ j $-th vectors. The DPP objective function is now given as the determinant function $ f:2^{[n]} \to \mathbb{R} $:
\[ 
f(S) = \det (I + X_S ) \enspace,
\]
where $ X_S $ is the principal submatrix of $ X $ indexed by $ S $. 
We note that the identity matrix was added here to the objective to make sure that the function $ f $ is monotone. Moreover, this function was shown to be weakly submodular by~\cite{bian2017guarantees}, although it might not be submodular. In light of the non-submodularity of the determinant function $f$, rather than optimize it directly, prior works considered its log, which is known to be submodular~\cite{kulesza2012determinantal,xu2015gaze}. This allows the use of standard submodular optimization techniques, but does not guarantee any approximation ratio for the original objective function. 
Fortunately, with the help of \Alg, we can maximize the determinant function $f$ directly and get a guranteed approximation ratio.

The video that we have selected for this experiment lasts for 
roughly $7$ minutes and a half, and we chose to created a summary of it
by extracting one representative frame from every 25 seconds. In other words, the constraint on the allowed summerization is given by a partition matroid in which a set $S$ of frames is independent (\ie, belongs to $ \mathcal{I} $) if and only if 
\[ 
|S\cap [25(i-1)+1, 25i] | \leq 1 \qquad \forall\; 1\leq i \leq \lceil n/25 \rceil \enspace.
\]
Given this constraint, the optimization problem that we need to solve is
\[ 
\max_{S\in \mathcal{I}} f(S) \enspace.
\]
\begin{figure}[htb]
	\centering

	\includegraphics[width=0.5\textwidth]{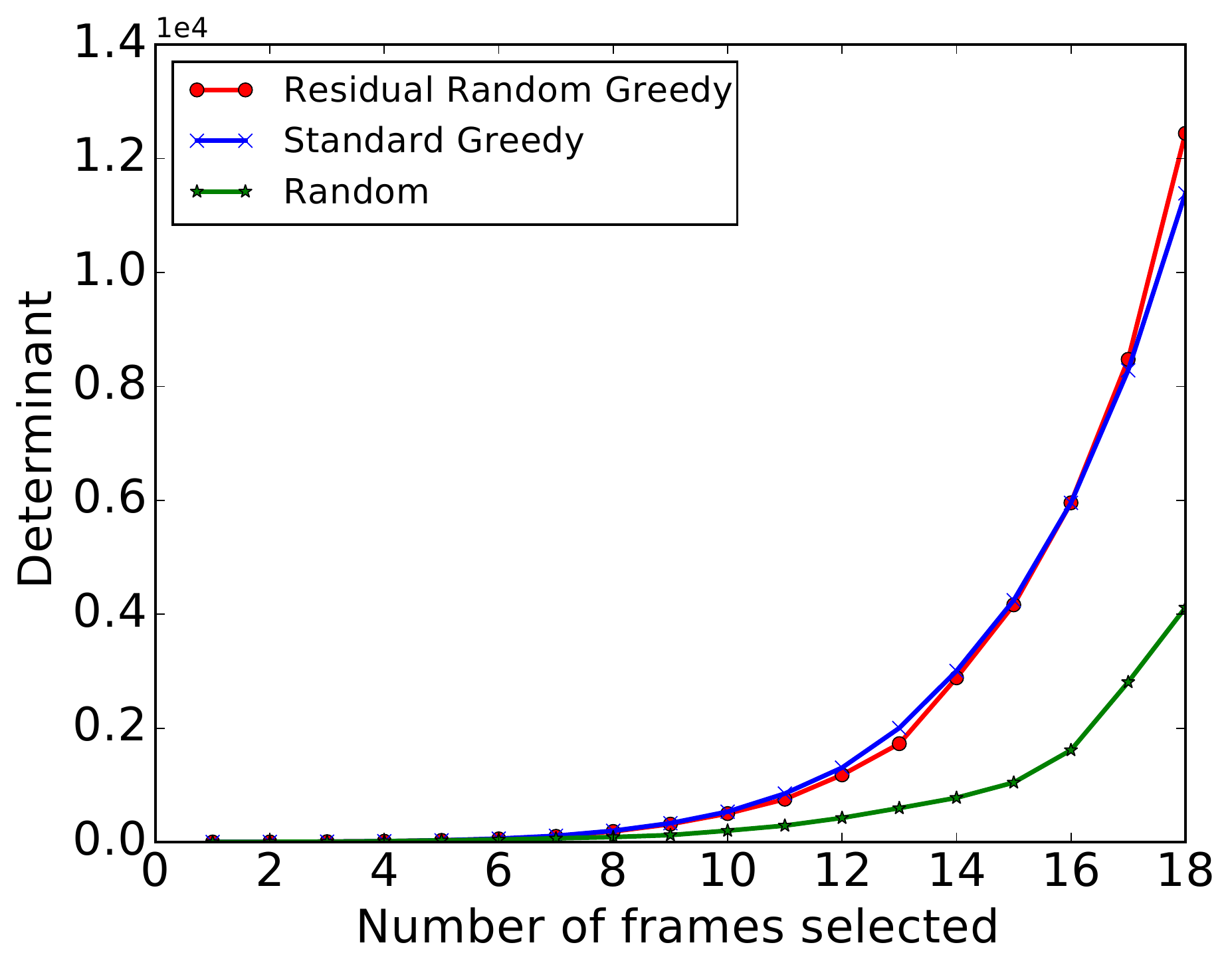}
	\caption{Determinant vs.\ the number of frames selected in the video summarization problem. \label{fig:video_summarization} }
\end{figure}%
\Cref{fig:video_summarization} illustrates the performance of {\RRG} and the two benchmark algorithms when they are applied to this problem. 
The frames selected by the three algorithms are shown in~\cref{fig:vidsum_random_greedy,fig:vidsum_greedy,fig:vidsum_random}. Each algorithm selects one frame per 25 seconds, and the selected 18 frames are arranged in these images in chronological order from left to right and from top to bottom.
It is quite easy to observe that both \Alg and \AlgGreedy produce summaries with higher diversity than \AlgRand. For example, the first two frames selected by \AlgRand are about the same young lady in red, while \Alg and \AlgGreedy choose one about the young lady and the other one about the TV show studio; and again, the 12-th and 13-th frames selected by \AlgRand are both about a lady in black, while \Alg and \AlgGreedy do not produce duplications, which allows them to cover other content. Comparing the outputs of \Alg and \AlgGreedy is more subtle, but the result of {\RRG} seems to be slightly better. The 10-th and 14-th frames selected by \Alg show two participants that are not recognized by the other two summaries; in contrast, \AlgGreedy chooses five frames about TV show guests sitting behind a long blue desk in the studio, which reduces the diversity of the frames.

\begin{figure*}[t]
	\begin{minipage}{\linewidth}
		\centering
		\includegraphics[width=0.105\linewidth]{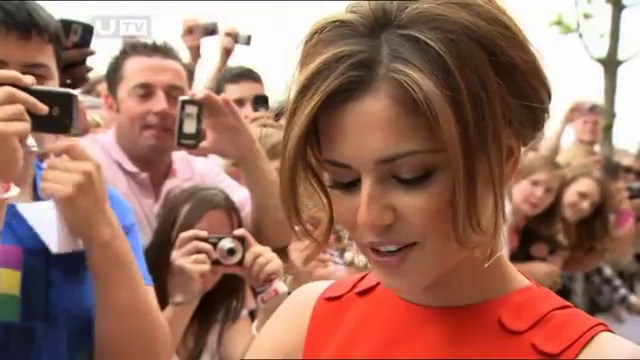}
		\includegraphics[width=0.105\linewidth]{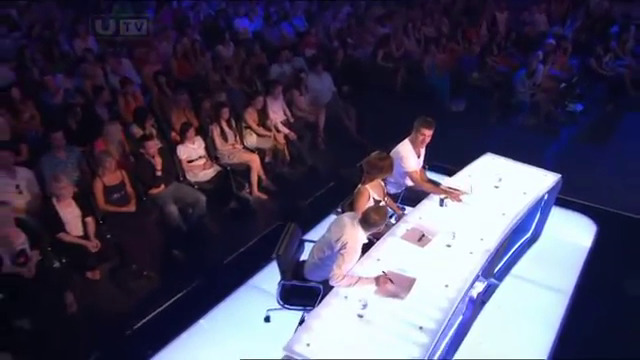}
		\includegraphics[width=0.105\linewidth]{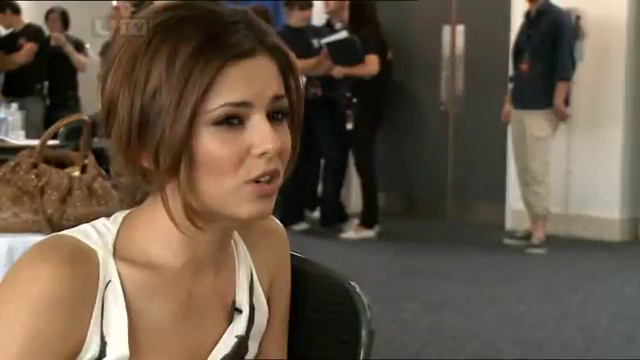}
		\includegraphics[width=0.105\linewidth]{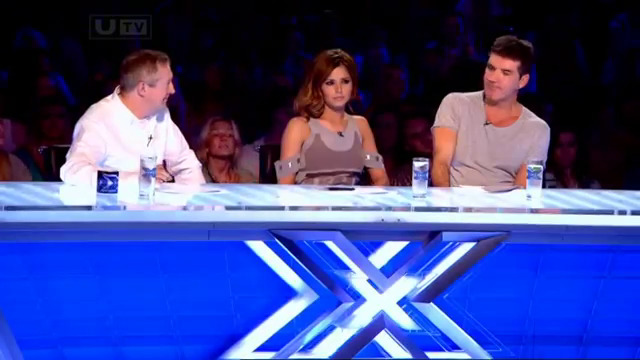}
		\includegraphics[width=0.105\linewidth]{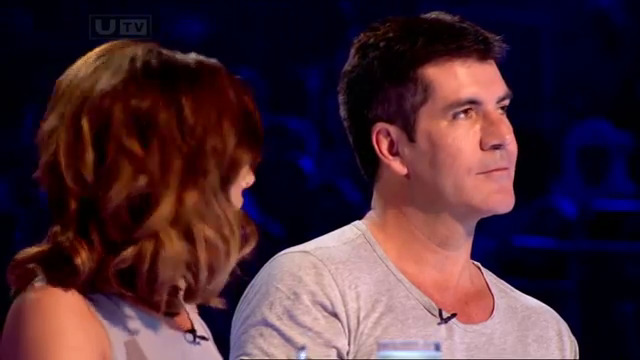}
		\includegraphics[width=0.105\linewidth]{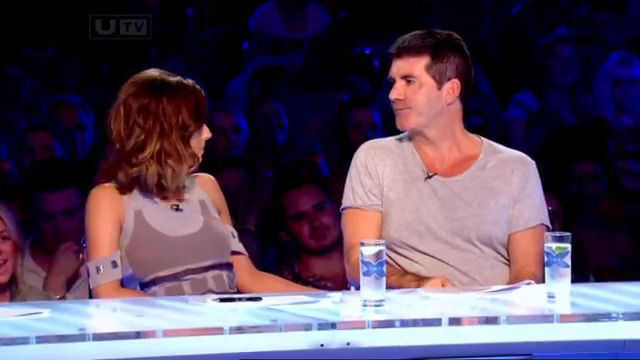}
		\includegraphics[width=0.105\linewidth]{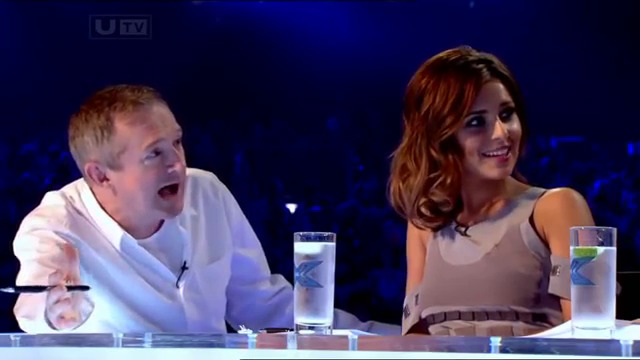}
		\includegraphics[width=0.105\linewidth]{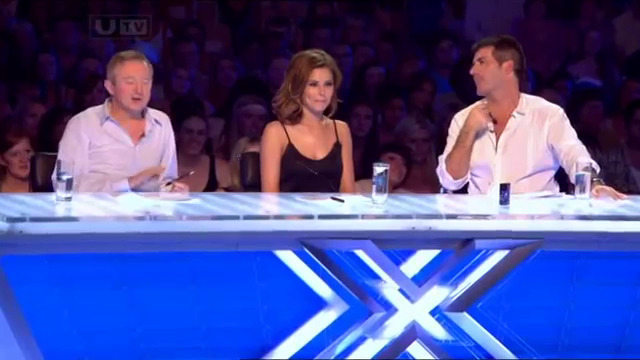}
		\includegraphics[width=0.105\linewidth]{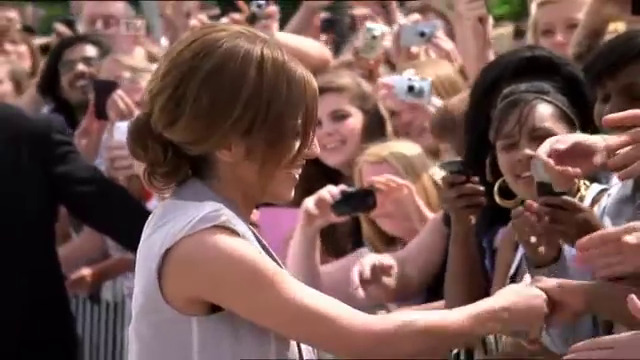}
		
		\smallskip
		\includegraphics[width=0.105\linewidth]{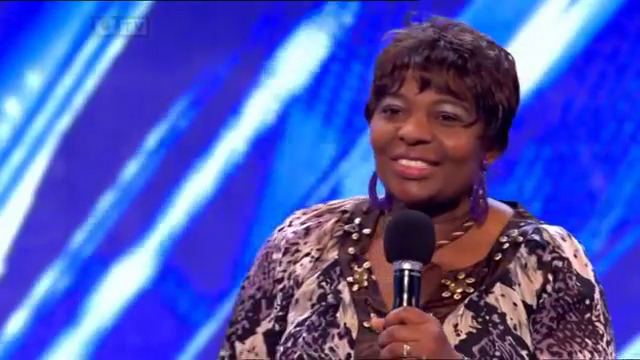}
		\includegraphics[width=0.105\linewidth]{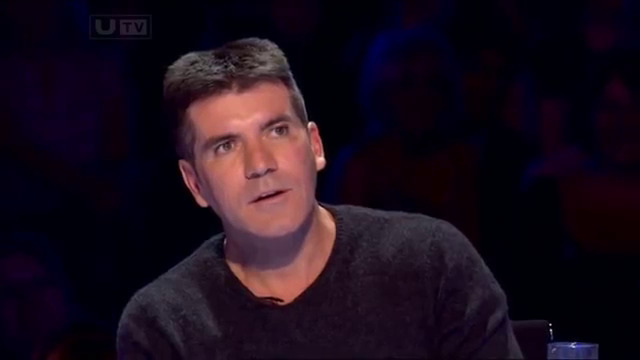}
		\includegraphics[width=0.105\linewidth]{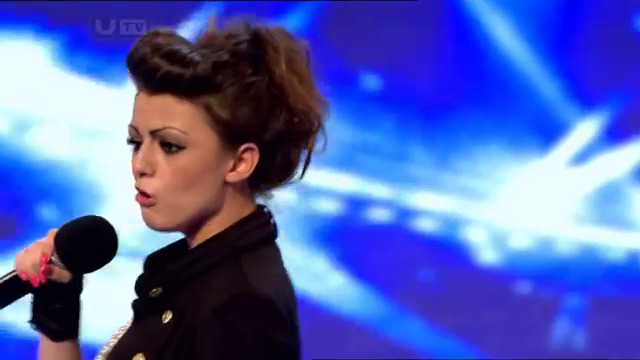}
		\includegraphics[width=0.105\linewidth]{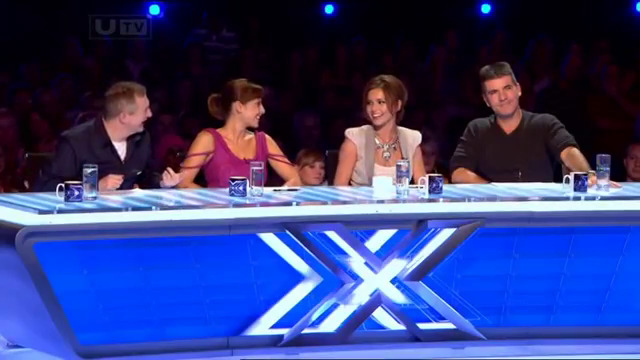}
		\includegraphics[width=0.105\linewidth]{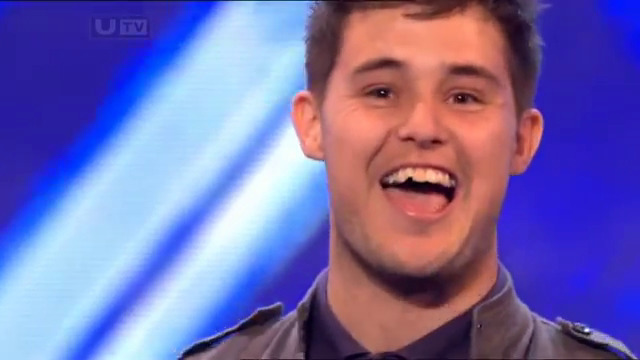}
		\includegraphics[width=0.105\linewidth]{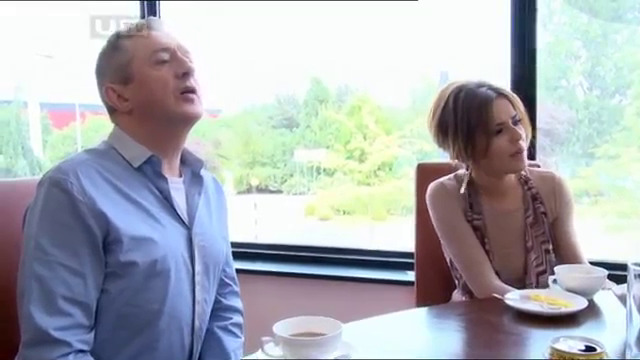}
		\includegraphics[width=0.105\linewidth]{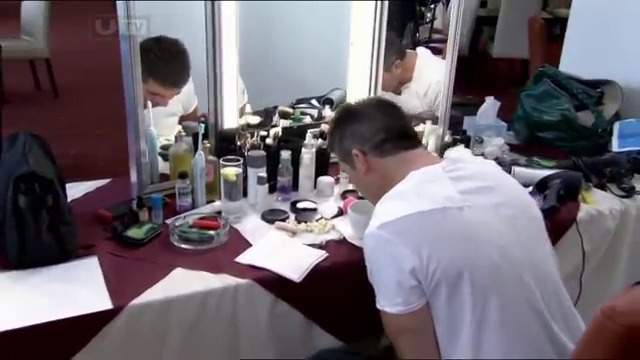}
		\includegraphics[width=0.105\linewidth]{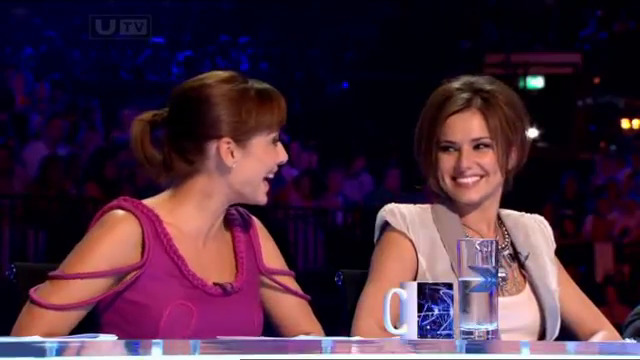}
		\includegraphics[width=0.105\linewidth]{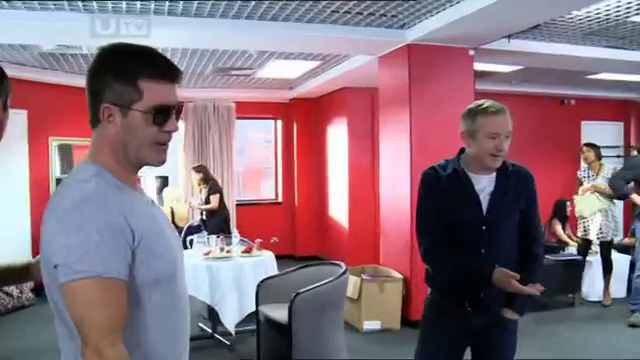}
		\caption{Frames selected by \Alg as its video summary. \label{fig:vidsum_random_greedy} }.  
	\end{minipage}
	\begin{minipage}{\linewidth}
		\centering
		\includegraphics[width=0.105\linewidth]{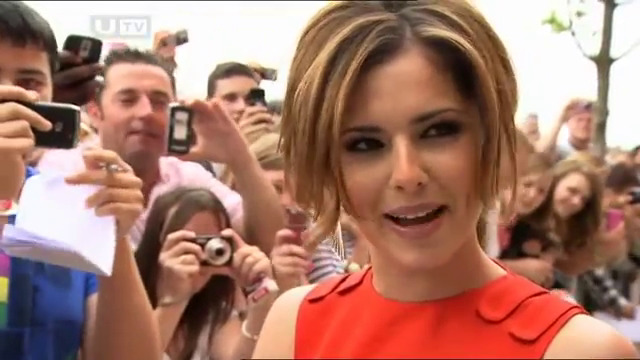}
		\includegraphics[width=0.105\linewidth]{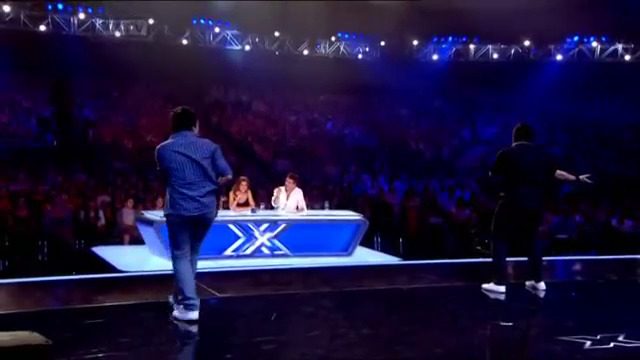}
		\includegraphics[width=0.105\linewidth]{pix/frame/F00075.jpg}
		\includegraphics[width=0.105\linewidth]{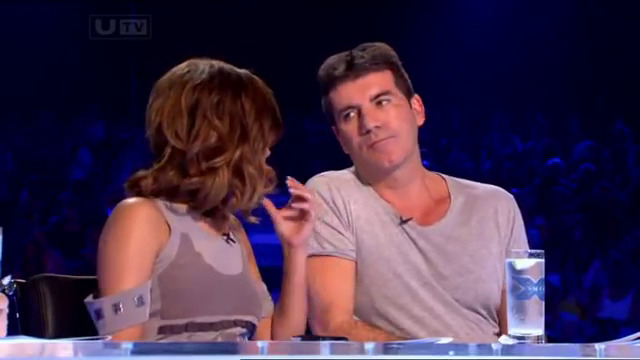}
		\includegraphics[width=0.105\linewidth]{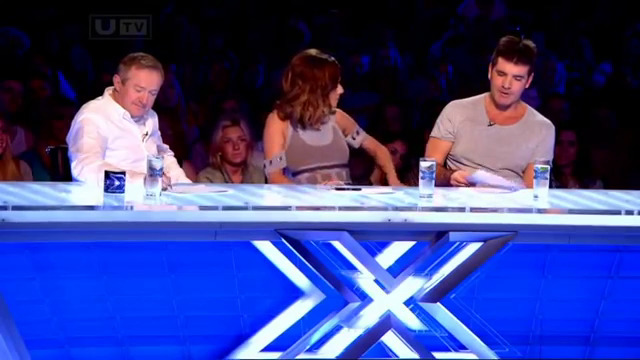}
		\includegraphics[width=0.105\linewidth]{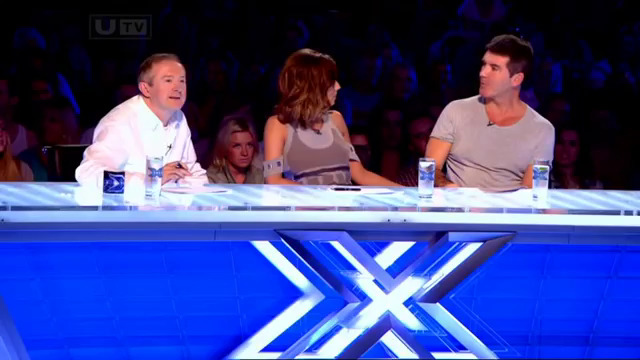}
		\includegraphics[width=0.105\linewidth]{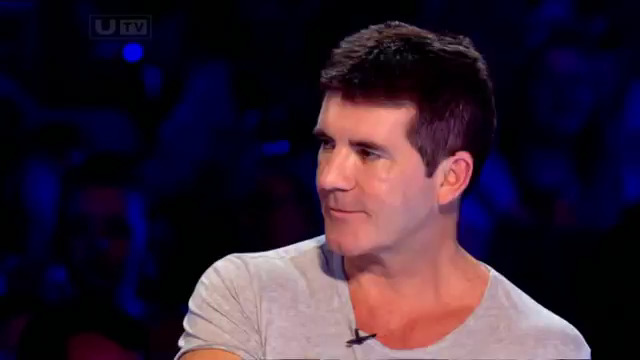}
		\includegraphics[width=0.105\linewidth]{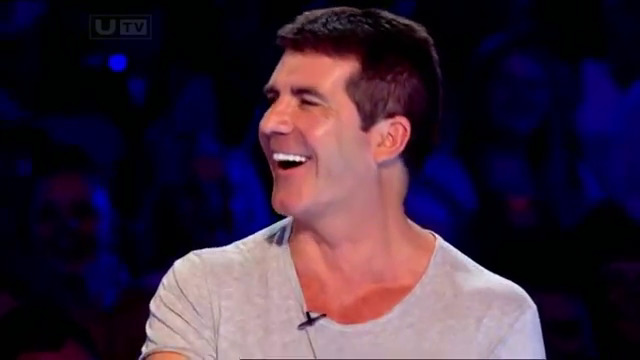}
		\includegraphics[width=0.105\linewidth]{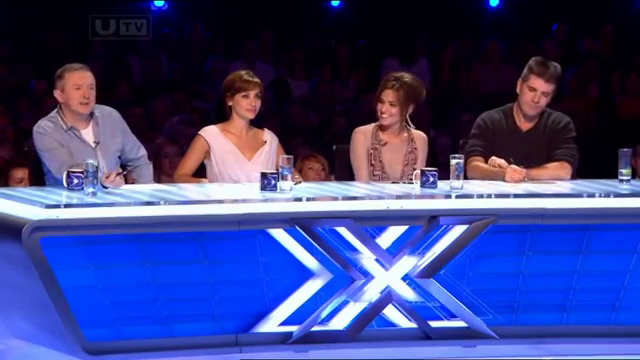}
		
		\smallskip
		\includegraphics[width=0.105\linewidth]{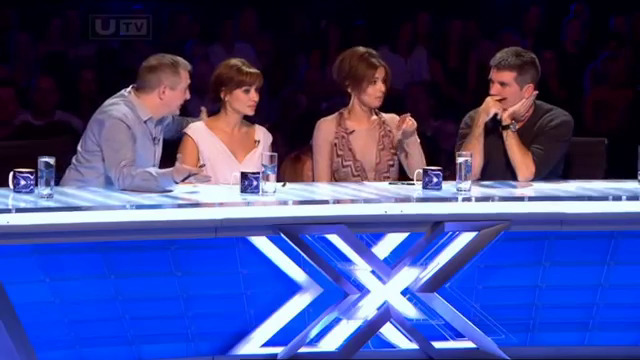}
		\includegraphics[width=0.105\linewidth]{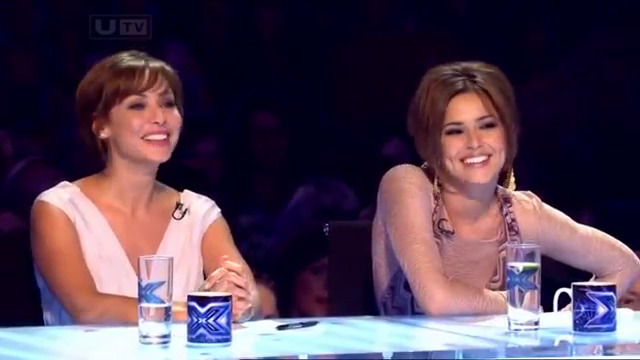}
		\includegraphics[width=0.105\linewidth]{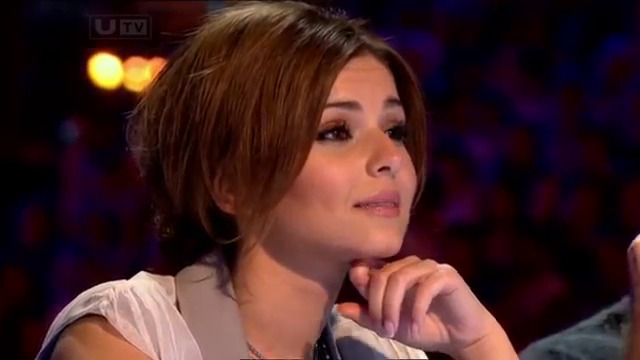}
		\includegraphics[width=0.105\linewidth]{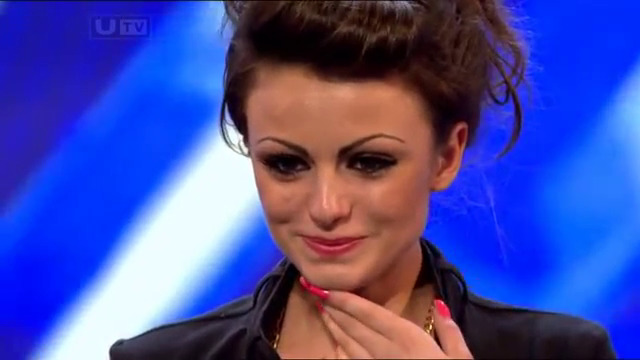}
		\includegraphics[width=0.105\linewidth]{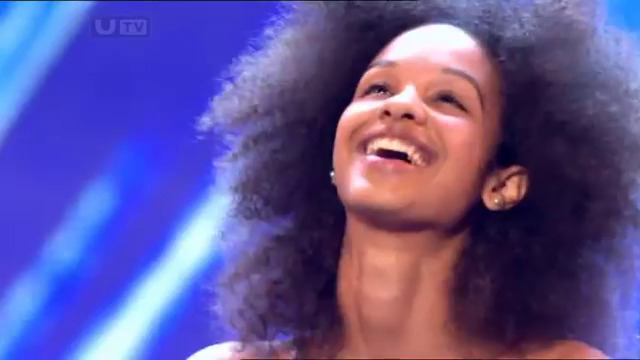}
		\includegraphics[width=0.105\linewidth]{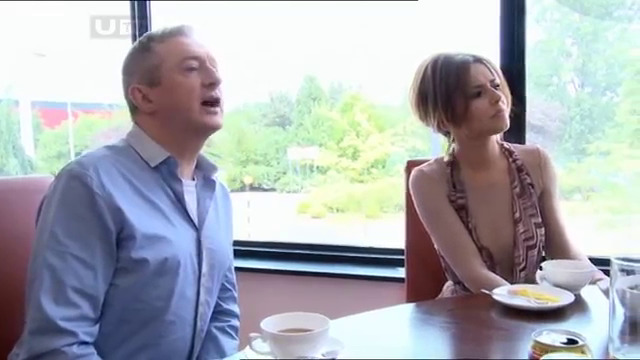}
		\includegraphics[width=0.105\linewidth]{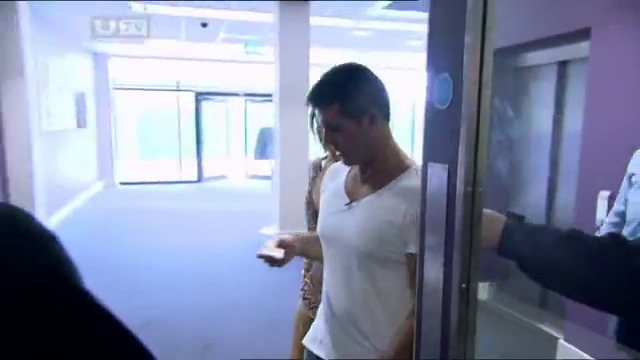}
		\includegraphics[width=0.105\linewidth]{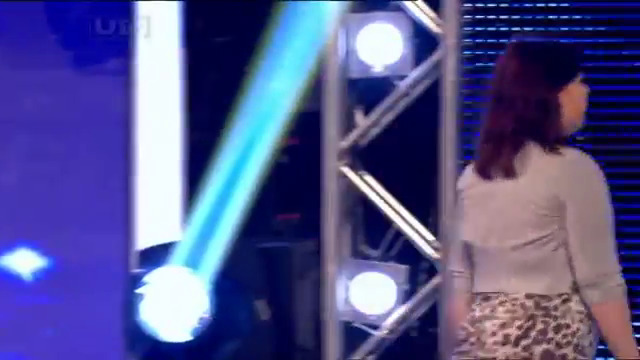}
		\includegraphics[width=0.105\linewidth]{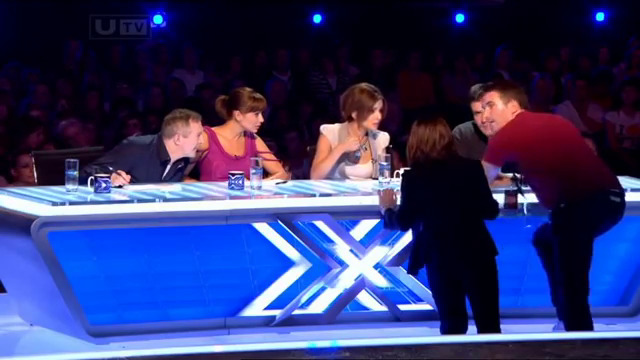}
		\caption{Frames selected by \AlgGreedy as its video summary. \label{fig:vidsum_greedy} }.  
	\end{minipage}
	\begin{minipage}{\linewidth}
		\centering
		\includegraphics[width=0.105\linewidth]{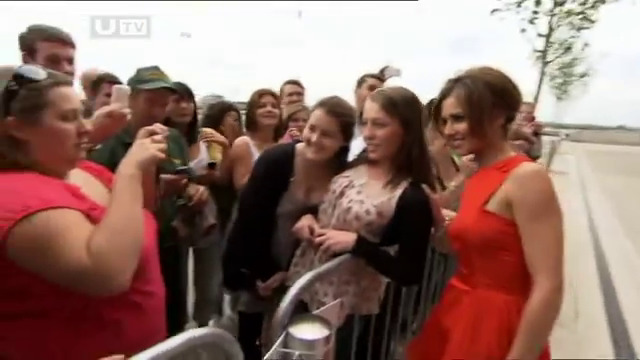}
		\includegraphics[width=0.105\linewidth]{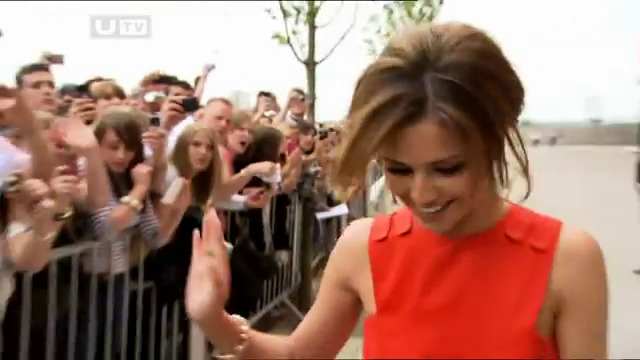}
		\includegraphics[width=0.105\linewidth]{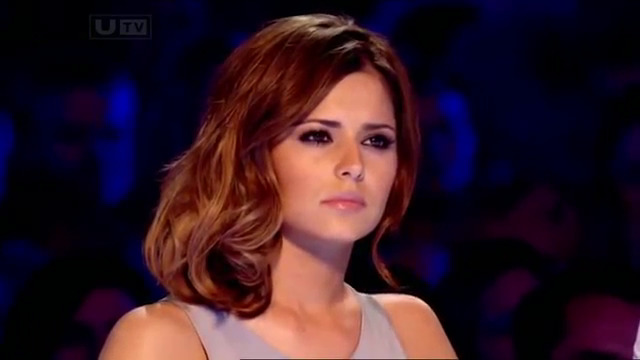}
		\includegraphics[width=0.105\linewidth]{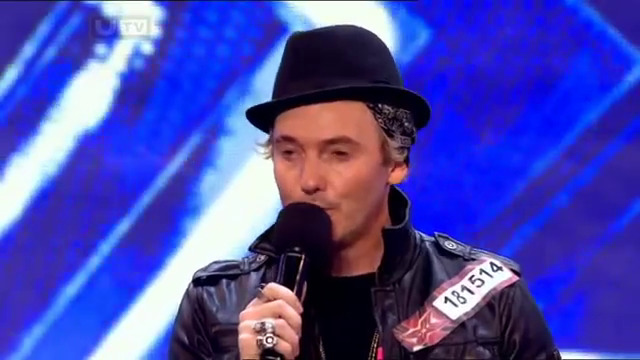}
		\includegraphics[width=0.105\linewidth]{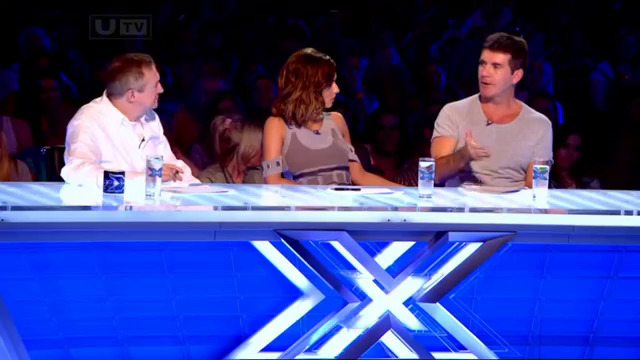}
		\includegraphics[width=0.105\linewidth]{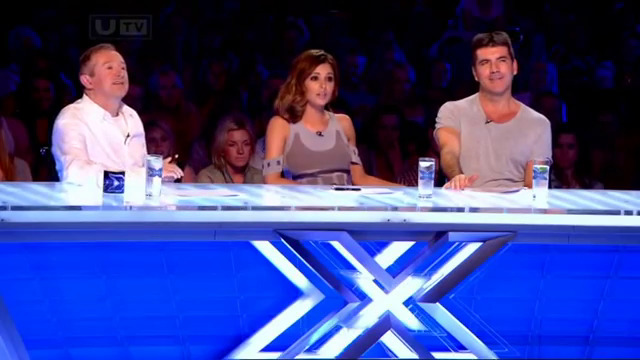}
		\includegraphics[width=0.105\linewidth]{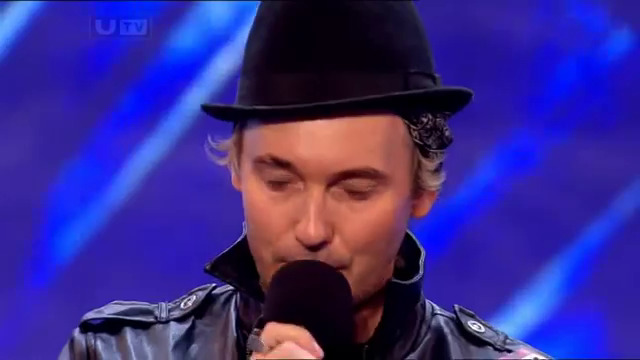}
		\includegraphics[width=0.105\linewidth]{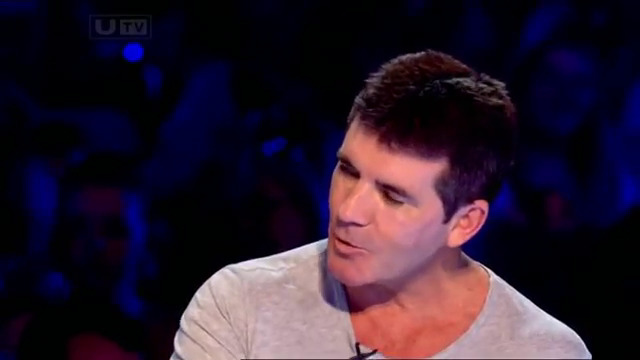}
		\includegraphics[width=0.105\linewidth]{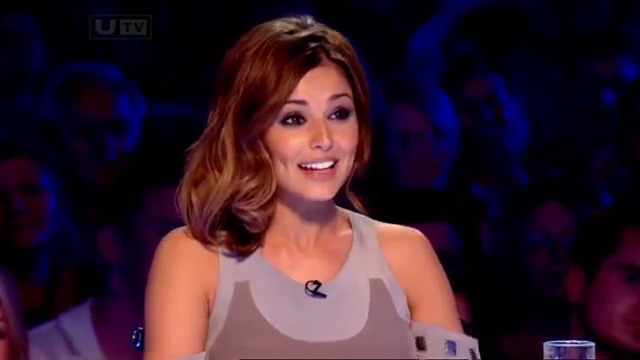}
		
		\smallskip
		
		\includegraphics[width=0.105\linewidth]{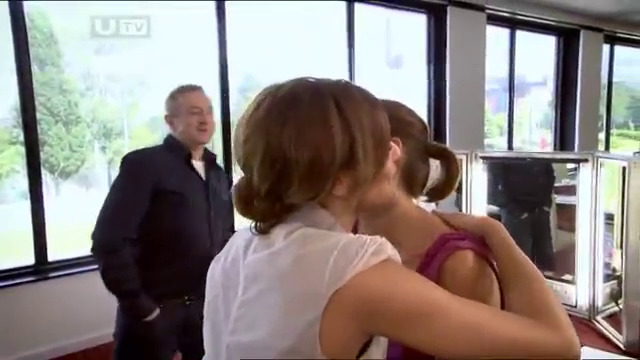}
		\includegraphics[width=0.105\linewidth]{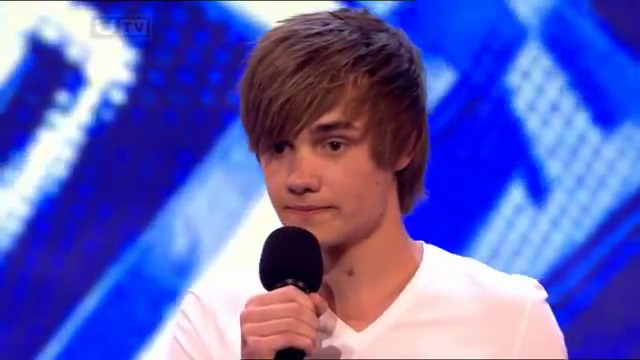}
		\includegraphics[width=0.105\linewidth]{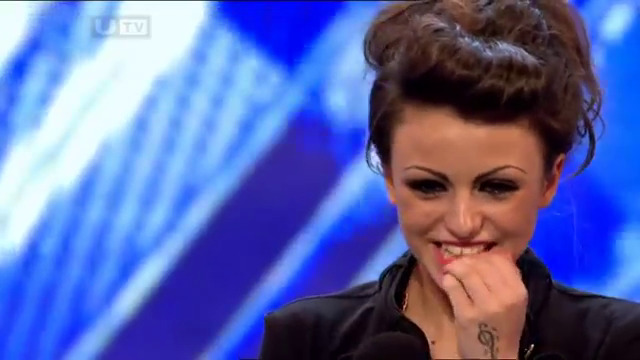}
		\includegraphics[width=0.105\linewidth]{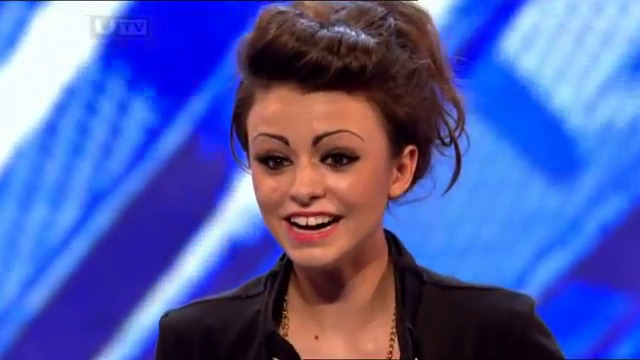}
		\includegraphics[width=0.105\linewidth]{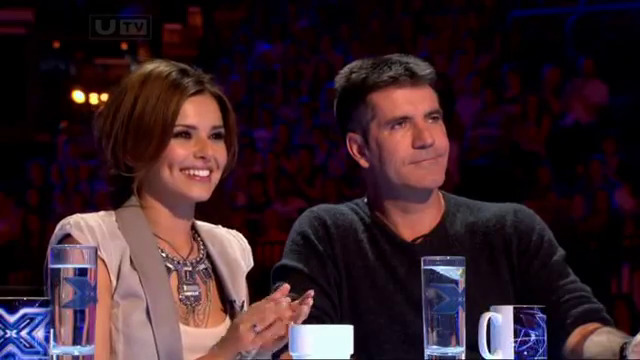}
		\includegraphics[width=0.105\linewidth]{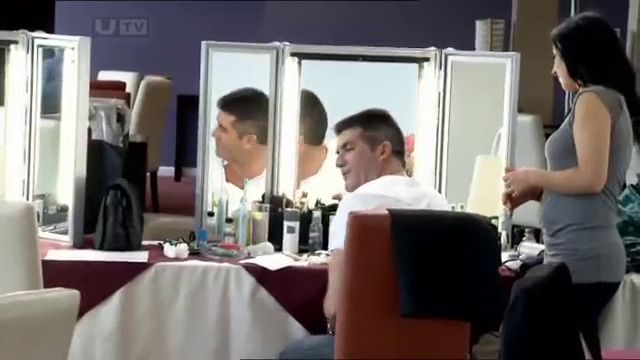}
		\includegraphics[width=0.105\linewidth]{pix/frame/F00401.jpg}
		\includegraphics[width=0.105\linewidth]{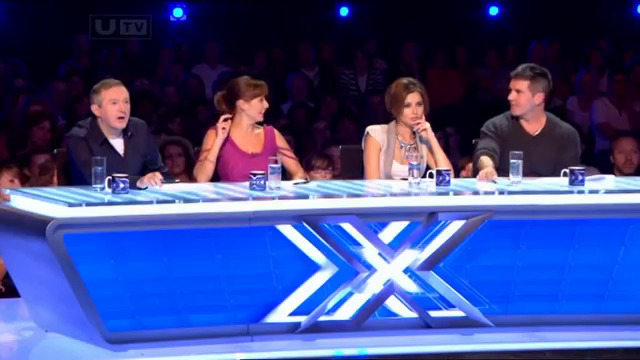}
		\includegraphics[width=0.105\linewidth]{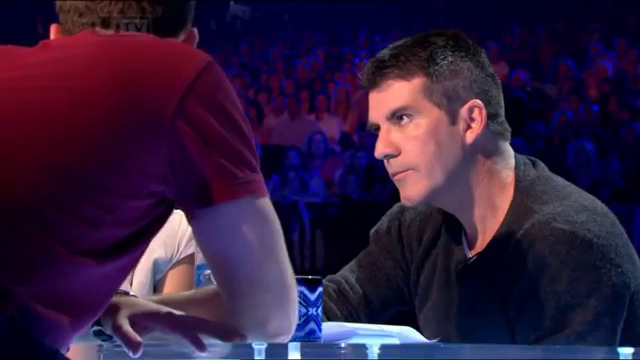}
		\caption{Frames selected by \AlgRand as its video summary. \label{fig:vidsum_random} }.  
	\end{minipage}
\end{figure*}

\subsection{Splice Site Detection}\label{sub:splice}

An important problem in computational biology is the identification of true splice sites from similar decoy splice sites in 
nascent precursor messenger RNA (pre-mRNA) transcripts.
Splice sites are nucleotide sequences that mark the beginnings and ends of introns 
(nucleotide sequences removed by RNA splicing during maturation of mRNA).
In general, the two ends of an RNA sequence are known the 5'-end and the 3'-end. In the case of introns, these ends are also known as the splice donor site and the splice acceptor site, respectively. We are interested in the problem of identifying splice donor and splice acceptor sites. In other words, given a sequence of nucleotides, we want to determine whether this sequence represents a splice donor/acceptor site. A splice donor site always includes the nucleotide sequence ``GT'' at its 5'-end, while a splice acceptor site has the sequence ``AG'' at its 3'-end. However, both kinds of sites include additional nucleotides whose identify should be taken into account when deciding whether a given sequence of nucleotides is a splice donor/acceptor site 

The MEMset dateset provides instances of true and false splice donor/acceptor sites. We note that false splice donor/acceptor sites also include the compulsory ``GT''/``AG'' sequences, but differ from true sites in their other nucleotides. A detailed description of this dataset is presented in~\cite{yeo2004maximum}. In this set of experiments, we used logistic regression on the MEMset dateset to determine the nucleotide values that have the largest influence on the categorization of splice sites into true and false sites. As a preprocessing step, we removed the consensus ``GT'' and ``AG'' sequences. Then, we considered the natural explanatory variables for this problem, \ie, a single variable taking the four values A, C, T and G for every nucleotide of the splice site. As these explanatory variables are categorical; we converted each of them into four binary variables via one-hot encoding. 
In other words, for each explanatory variable $ x_i $ (which takes values from $ \{\text{A, C, T, G}\} $), we created four binary dummy variables $ x'_{4i-3},x'_{4i-2},x'_{4i-1},x'_{4i} $, where 
$ x'_{4i-3}  $ ($ x'_{4i-2}$,$x'_{4i-1}$ and $x'_{4i} $) takes the value one exactly when $ x_i $ is  A  (C, T and G, respectively).
Given this encoding, a natural constraint is that at most one of the four binary variables can be set to one; which is a partition matroid constraint. 
Let us denote the $ j $-th set of binary dummy variables and the corresponding outcome variable by $ x'_{i, j} $ and $ y_j $, respectively.
As is standard in logistic regression, 
we assume that for all $ j $,
\[ \log \left( \frac{p_j}{1-p_j}  \right) = \sum_i w_i x'_{i, j} \]
and \[ y_j \mid \{ x'_{i, j} : 1\leq i\leq 4n \} \sim \mathrm{Bernoulli}(p_j) \enspace, \]
where $ n $ is the total number of categorical explanatory variables (thus, we have $ 4n $ binary dummy variables in total). 
The log-likelihood function of logistic regression can now be written as
\[ l(w) = \sum_j y_j(\sum_i w_i x'_{i, j}) - \log(1+ \exp(\sum_i w_i x'_{i, j}) ) \enspace. \]
As mentioned above, our objective is to find the set of nucleotide values that has the most influence of this log-likelihood function. Thus, the objective function we want to optimize is the normalized log-likelihood
$ f(S) \triangleq g(S)-g(\varnothing) $, where 
\[ g(S) = \max_{ w:  \supp(w)\subseteq S } \mspace{-9mu}  l(w) \enspace. \]
The weak submodularity of this objective function was shown in~\cite{elenberg2016restricted}. In \cref{fig:sites}, we present the result of applying {\RRG} and the two benchmark algorithms mentioned above to this optimization problem. The ranks of the partition matroids for the donor and acceptor sites in \cref{fig:sites} are $ 7 $ and $ 21 $, respectively, because this is the number of nucleotides provided for each one of these kinds of sites by the MEMset dataset. One can note that {\AlgGreedy} and {\RRG} exhibit comparable performance (especially at their termination point) and consistently outperform {\AlgRand}. 

\begin{figure*}[htb]
	\centering
	\subfigure[Splice donor site detection \label{fig:donor}]{
		\includegraphics[width=0.49\linewidth]{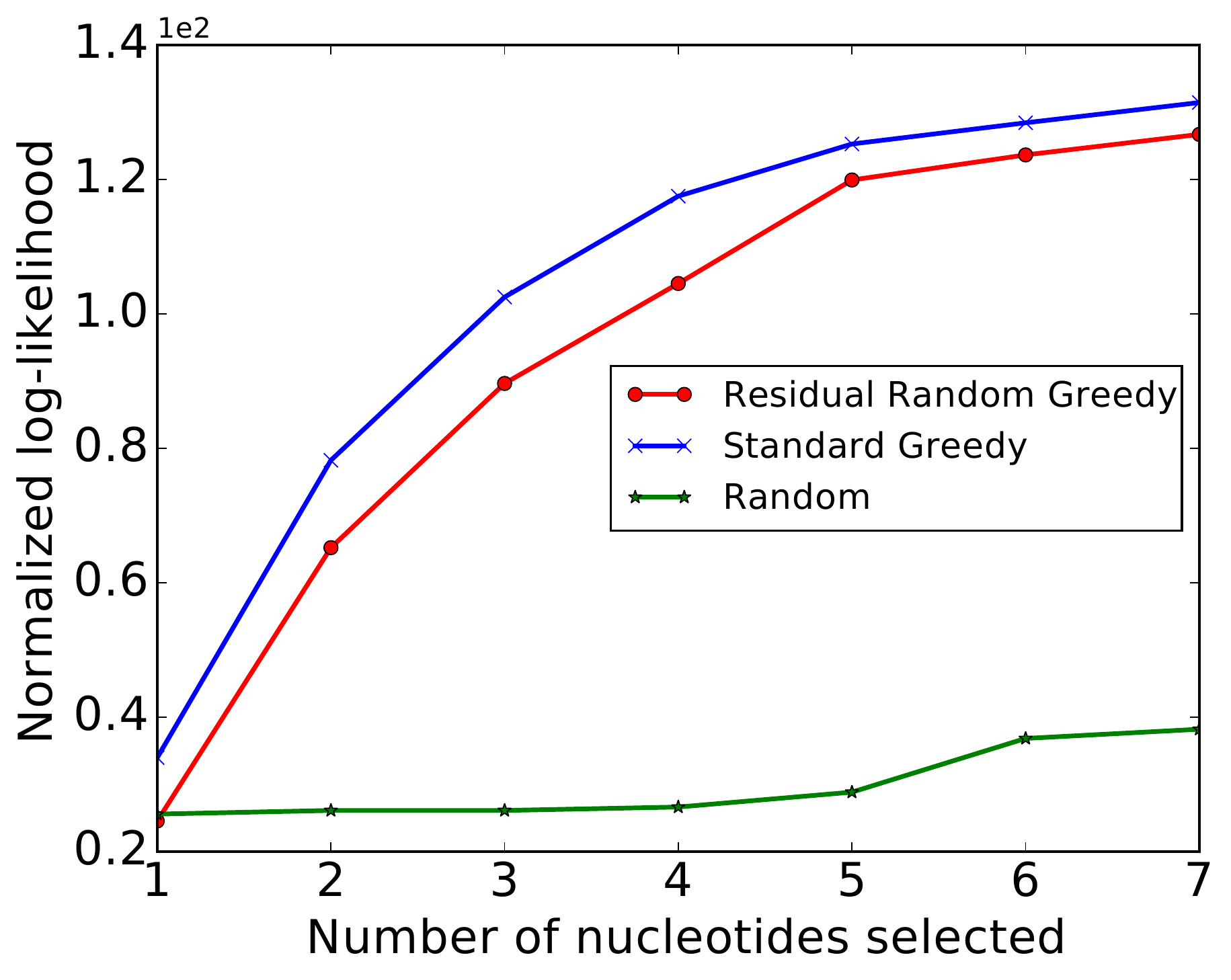}
	}
	\subfigure[Splice acceptor site detection  \label{fig:acceptor}]{
		\includegraphics[width=0.47\linewidth]{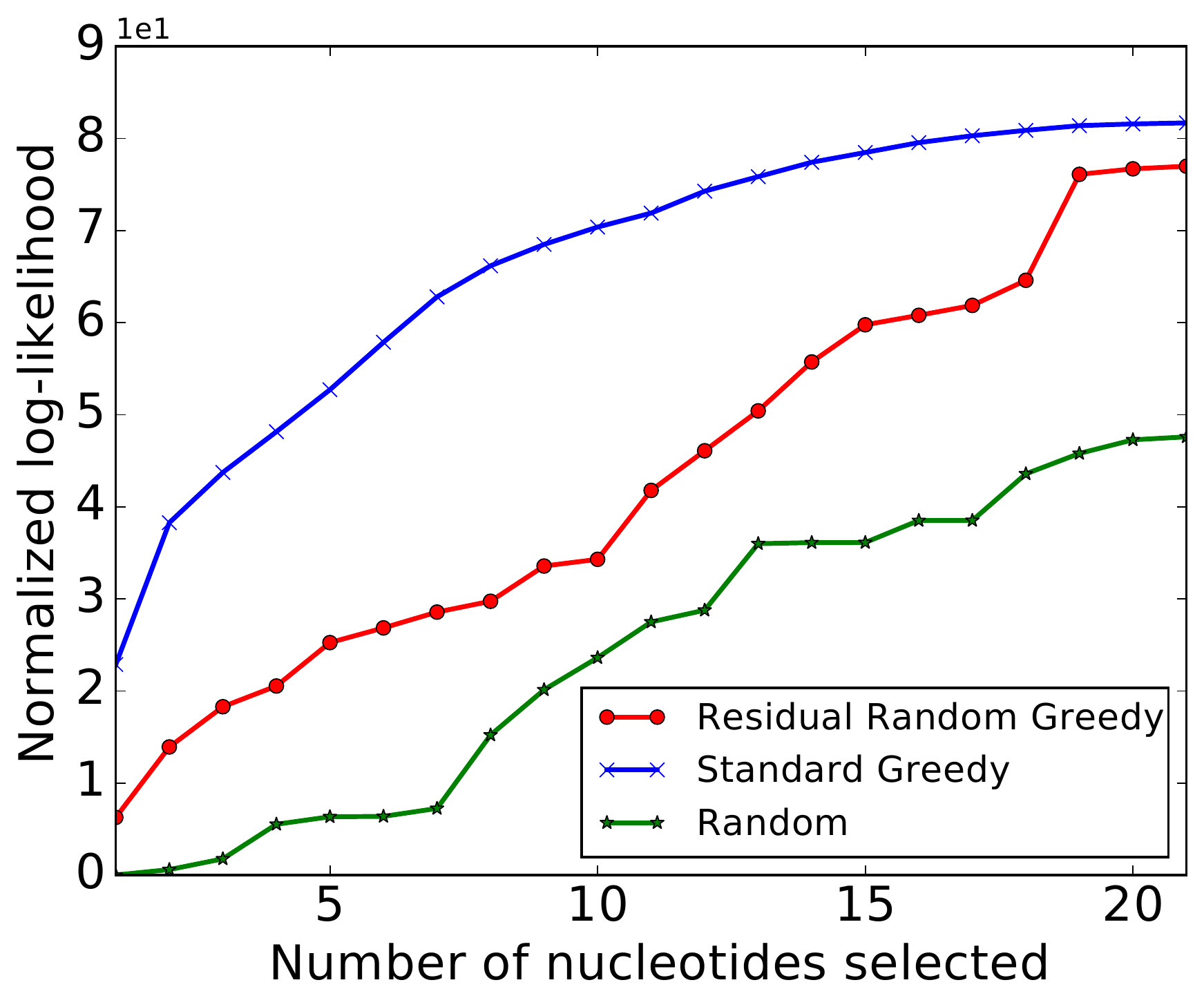}
	}
	%
	%
	\caption{Normalized log-likelihood vs.\ the number of nucleotides selected for splice donor and acceptor sites.   }\label{fig:sites}
\end{figure*}

\subsection{Black-Box Interpretation}\label{sub:interp}

In this set of experiments, we consider the problem of interpreting the predictions of black-box machine learning algorithms---\ie, explaining the reasons for their prediction. Specifically, we follow the setting of~\cite{EDFK17,LIME}. Given an image $ I $ and a label $ l $, the LIME framework~\cite{LIME} outputs the likelihood that the image $ I $ has the label $ l $. For example, the top five labels (in terms of the likelihood) assigned by the LIME framework to~\cref{fig:original_image} are \emph{Bernese mountain dog} (with likelihood 0.44), \emph{EntleBucher} (with likelihood 0.21), \emph{Greater Swiss Mountain dog} (with likelihood 0.046), \emph{Appenzeller} (with likelihood 0.033) and \emph{Egyptian cat} (with likelihood 0.0044). Here we ask which parts of the image best explain the most likely label \emph{Bernese mountain dog}; and let us denote this label by $ l_1 $ from now on. To this end, we applied the SLIC algorithm~\cite{SLIC} to the image, and this algorithm segmented the image into 25 superpixels (each superpixel is a tile of adjacent pixels of the image). Our task now is to select $ 10 $ superpixels that best explain the label $ l_1 $. We use $\cN$ to denote a ground set consisting of all the superpixels. For any subset $ S $ of $ \mathcal{N} $, let $ I(S) $ denote the subimage where only superpixels in $ S $ are present, and let $ f(S) $ be the likelihood that the subimage $ I(S) $ has the label $ l_1 $. Using this notation, our task can be formulated as the following maximization problem: $\max_{|S|\leq k} f(S)$, where $ k=10 $. We have applied \Alg, \AlgGreedy and \AlgRand to this optimization problem; and the superpixels selected by the three algorithms are visualized in~\cref{fig:interp_random_greedy,fig:interp_greedy,fig:interp_random}, respectively. We note that the set function $ f(S) $ depends on the black-box machine learning algorithm, and thus, and may not be weakly submodular, or even monotone, in general. Nevertheless, {\RRG} and our benchmark algorithms still produce interesting results when used to optimize it.

Recall that the label that we try to explain is \emph{Bernese mountain dog}. The superpixels selected by \Alg (see~\cref{fig:interp_random_greedy}) include all parts of the image that form the head of a Bernese mountain dog, while the superpixels selected by \AlgGreedy (see~\cref{fig:interp_greedy}) only cover the nose of the dog and a small portion of its body. Additionally, they also incorrectly include the head of the cat. The performance of \AlgRand is the worst (see~\cref{fig:interp_random}) as it mostly selects superpixels which are irrelevant to the dog. We also illustrate in~\cref{fig:image_interpretation} the likelihood that the subimage induced by the selected superpixels has the label $ l_1 $ versus the number of superpixels selected. It can be observed that \Alg outperforms \AlgGreedy when ten superpixels are selected. It is also noteworthy to observe that the likelihood achieved by \AlgRand remains almost zero when the number of selected superpixels varies from $ 1 $ to $ 10 $, reaching only the value $ 4.39\times 10^{-4} $ at its highest point.

\begin{figure*}[htb]
	
	\subfigure[Original image \label{fig:original_image}]{
		\includegraphics[width=0.23\linewidth]{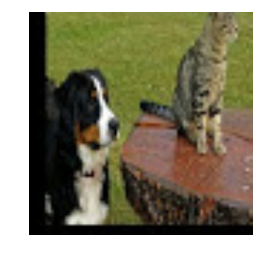}
	}	
	\subfigure[Res.~Random Greedy  \label{fig:interp_random_greedy}]{
		\includegraphics[width=0.23\linewidth]{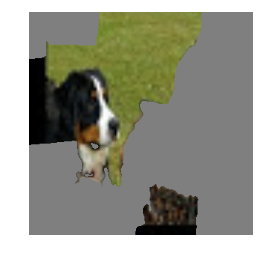}
	}
	\subfigure[Standard Greedy  \label{fig:interp_greedy}]{
		\includegraphics[width=0.23\linewidth]{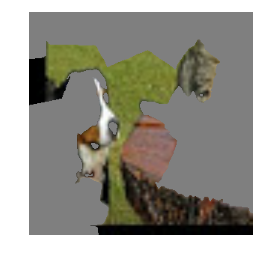}
	}\subfigure[Random \label{fig:interp_random}]{
		\includegraphics[width=0.23\linewidth]{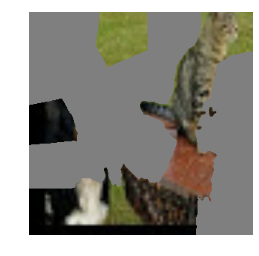}
	}
	%
	%
	\caption{Original image and visualization of   the superpixels selected by the three algorithms to explain the label \emph{Bernese mountain dog}.  }
\end{figure*}

\begin{figure}[htb]
	\centering
	\includegraphics[width=0.5\textwidth]{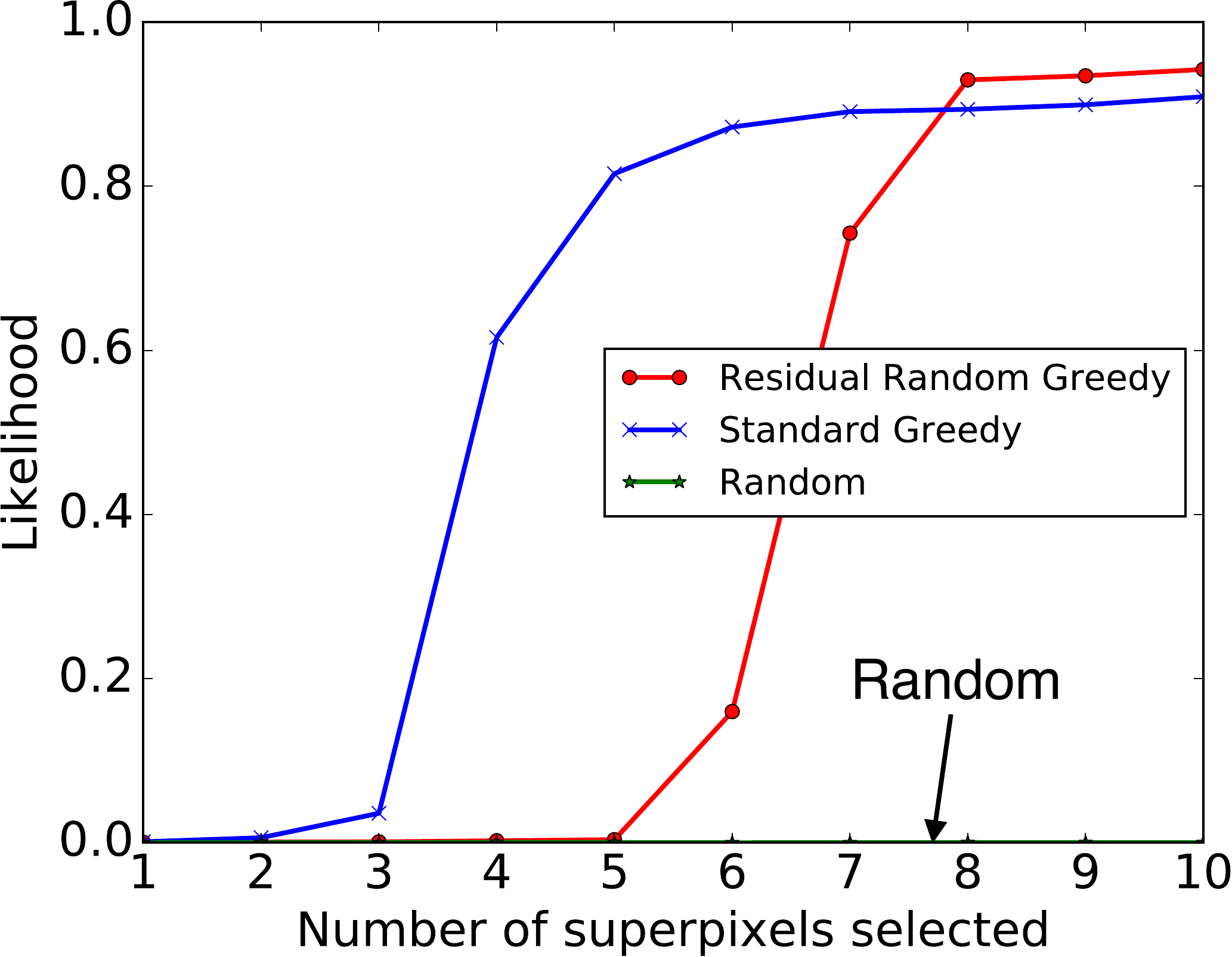}
	\caption{Likelihood that the subimage induced by the selected superpixels has the label vs.\ the number of superpixels selected. Note that the highest likelihood achieved by \AlgRand is as low as $ 4.39\times 10^{-4} $, and thus is shown to be almost zero. \label{fig:image_interpretation}}
\end{figure}

\section{Conclusion} \label{sec:conclusion}

In this paper we have proved the first non-trivial approximation ratio for maximizing a $\gamma$-weakly submodular function subject to a general matroid constraint. Our result opens the door for new applications and also suggests that the greedy algorithm performs well in practice for this problem. Moreover, we were able to demonstrate experimentally, on multiple applications, this suggested good behavior of the greedy algorithm.

The most significant question that we leave open is whether the greedy algorithm has a good \emph{provable} approximation ratio for the above problem. We note that this is not necessarily implied by the good practical behavior of the greedy algorithm. For example, on the closely related problem of maximizing a non-monotone submodular function, the greedy algorithm performs well in practice despite having an unbounded theoretical approximation ratio. Personally, we tend to believe that the greedy algorithm does have a good provable approximation ratio for the problem because we were unable to design any example on which the approximation ratio of the greedy algorithm is non-constant (for a constant $\gamma$). However, proving this formally is likely to require new ideas, and is thus, a very interesting area for future work.

\subsubsection*{Acknowledgements}

We thank Sahand Negahban for fruitful discussions.

\bibliographystyle{plain}
\bibliography{WeakSubmodular}

\appendix
\section{Removal of the Low Order Term} \label{app:low_order_term}

Theorem~\ref{thm:main_result_weak} proves an approximation ratio guarantee for Algorithm~\ref{alg:ResidualRandomGreedy} which is weaker than the approximation ratio guarantee of Theorem~\ref{thm:main_result} by a low order term of $O(k^{-1})$. In this appendix we prove that this low order term can be dropped from the guarantee of Theorem~\ref{thm:main_result_weak}, which yields Theorem~\ref{thm:main_result}. In fact, we even prove the following stronger theorem.

\begin{theorem}
Let $c(\gamma)$ be an arbitrary function of $\gamma$, and let $\eps(k)$ be a function of $k$ that approaches $0$ as $k$ increases. Then, if the approximation ratio of Algorithm~\ref{alg:ResidualRandomGreedy} is at least $c(\gamma) - \eps(k)$, then it is also at least $c(\gamma)$.
\end{theorem}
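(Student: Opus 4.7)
The plan is to use a padding argument. Given any instance $(\cM, f)$ with $\cM = (\cN, \cI)$ of rank $k$, and any integer $t \geq 1$, I would construct an enlarged instance $(\cM', f')$ of rank $k + t$ on which Algorithm~\ref{alg:ResidualRandomGreedy} attains \emph{the same} approximation ratio as on $(\cM, f)$. Applying the hypothesis to the enlarged instance then gives the bound $c(\gamma) - \eps(k + t)$, and sending $t \to \infty$ kills the $\eps(\cdot)$ term, yielding the target ratio $c(\gamma)$.

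Concretely, let $D = \{d_1, \ldots, d_t\}$ be $t$ fresh elements disjoint from $\cN$, and let $\cM' = (\cN \cup D, \cI')$ with $\cI' = \{S \cup T : S \in \cI,\ T \subseteq D\}$ be the direct sum of $\cM$ with the free matroid on $D$; its rank is $k' = k + t$. Define $f'\colon 2^{\cN \cup D} \to \nnR$ by $f'(S) = f(S \cap \cN)$. Three quick checks settle the preliminaries: $f'$ is non-negative and monotone by inheritance; $f'$ is $\gamma$-weakly submodular, since dummy elements contribute $0$ to both sides of the weak-submodularity inequality and restricting to non-dummy elements reduces the inequality to that for $f$; and $f'(OPT') = f(OPT)$, since the base $OPT \cup D$ realizes this value and no base of $\cM'$ can exceed it (the $\cN$-part of any base of $\cM'$ is a base of $\cM$).

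The key technical step is to show $\bE[f'(S'_{k'})] = \bE[f(S_k)]$, where $S'_{k'}$ and $S_k$ denote the outputs of Algorithm~\ref{alg:ResidualRandomGreedy} on $(\cM', f')$ and $(\cM, f)$, respectively. Since every element of $D$ is a coloop of $\cM'$ whose marginal contribution under $f'$ is identically zero, any maximum-weight base $M'_i$ of $\cM' / S'_{i-1}$ decomposes as $B_i \cup (D \setminus S'_{i-1})$, where $B_i$ is a maximum-weight base of the contracted matroid $\cM / (S'_{i-1} \cap \cN)$ under the weights $f(\cdot \mid S'_{i-1} \cap \cN)$. Let $\sigma_j$ be the (random) iteration at which the $j$-th non-dummy element $v_j$ is selected. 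Conditional on the earlier non-dummy picks $v_1, \ldots, v_{j-1}$, the set $B_{\sigma_j}$ has size $k - j + 1$ and $v_j$ is uniform over $B_{\sigma_j}$---which is precisely the conditional distribution of $u_j$ given $u_1, \ldots, u_{j-1}$ in the original run. Induction on $j$ then yields that $(v_1, \ldots, v_k)$ and $(u_1, \ldots, u_k)$ share the same joint law, so $S'_{k'} \cap \cN$ is distributed as $S_k$ and $\bE[f'(S'_{k'})] = \bE[f(S'_{k'} \cap \cN)] = \bE[f(S_k)]$.

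Combining these pieces gives $\bE[f(S_k)] / f(OPT) = \bE[f'(S'_{k'})] / f'(OPT') \geq c(\gamma) - \eps(k + t)$ for every $t \geq 1$ (the degenerate case $f(OPT) = 0$ is immediate from non-negativity). Letting $t \to \infty$ yields $\bE[f(S_k)] / f(OPT) \geq c(\gamma)$, as desired. The main obstacle I anticipate is making the coupling precise: one must verify carefully that contracting already-picked dummy coloops of a direct sum leaves the remaining dummies as coloops of the contracted matroid, so that $M'_i$ genuinely splits as $B_i \cup (D \setminus S'_{i-1})$, and that a consistent tie-breaking convention (e.g., based on a fixed ordering of $\cN$, ignored on $D$) lets the two runs be coupled pick-by-pick. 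Once the coupling is in place, the remainder of the proof is a one-line limit.
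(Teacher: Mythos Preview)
Your proposal is correct and follows essentially the same approach as the paper: both pad the instance with dummy coloops that contribute nothing to $f$, verify that the padded instance has the same $\gamma$ and optimal value, and argue that the algorithm's output on the padded instance (after stripping dummies) has the same distribution as on the original. The only cosmetic difference is that the paper phrases this as a proof by contradiction with a fixed padding size $k'$, whereas you take a direct limit $t \to \infty$; your treatment of the coupling (and of tie-breaking) is in fact slightly more explicit than the paper's.
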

\begin{proof}
Assume towards a contradiction that the theorem is wrong. This implies that the approximation ratio of Algorithm~\ref{alg:ResidualRandomGreedy} is at least $c(\gamma) - \eps(k)$, and yet there exists an instance $I$ with a $\gamma_I$-weakly submodular objective function on which the approximation ratio of Algorithm~\ref{alg:ResidualRandomGreedy} is $c' < c(\gamma_I)$. Since $\eps(k)$ approaches $0$ when $k$ increases, we can find a value $k'$ such that $c' < c(\gamma_I) - \eps(k)$ for every $k \geq k'$. Note that we may assume, without loss of generality, that $k'$ is a non-negative integer. Consider now the variant of Algorithm~\ref{alg:ResidualRandomGreedy} given as Algorithm~\ref{alg:ResidualRandomGreedyVariant}. Intuitively, this variant extends the input by introducing $k'$ new elements which do not affect the objective function and can be used to extend every independent set of the matroid.
\begin{algorithm}
\caption{\textsf{Residual Random Greedy for Matroids (Variant)}$(f, \cM)$} \label{alg:ResidualRandomGreedyVariant}
Create $k'$ new elements, and let $\cN'$ denote the set of these new elements.\\
Extend the object function $f$ to the ground set $\cN \cup \cN'$ by setting $f(S) = f(S \setminus \cN')$ for every set $S$ which includes new elements.\\
Extend the matroid $\cM$ to the ground set $\cN \cup \cN'$ by determining that a set $S$ which includes new elements is independent if and only if $S \setminus \cN'$ is independent.

\BlankLine
Initialize: $S_0 \leftarrow \varnothing$.\\
\For{$i$ = $1$ \KwTo $k + k'$}
{
    Let $M_i$ be a base of $\cM / S_{i - 1}$ maximizing $\sum_{u \in M_i} f(u \mid S_{i - 1})$.\\
    Let $u_i$ be a uniformly random element from $M_i$.\\
    $S_i \leftarrow S_{i - 1} + u_i$.
}
Return $S_{k + k'} \setminus \cN'$.
\end{algorithm}

Observe that the extension increases the rank of the matroid $\cM$ to $k + k'$ and preserves the $\gamma$-weak submodularity of the objective function $f$. Thus, by our assumption on the approximation ratio of Algorithm~\ref{alg:ResidualRandomGreedy}, $S_{k + k'}$ must provide an approximation ratio of at least $c(\gamma) - \eps(k + k')$ for the problem of maximizing the extended objective function $f$ subject to the extended matroid $\cM$. One can verify that, together with the properties of $\cN'$, this implies that $S_{k + k'} \setminus \cN'$ provides an approximation ratio of at least $c(\gamma) - \eps(k + k')$ for the problem of maximizing the original objective function $f$ subject to the original matroid $\cM$. Hence, Algorithm~\ref{alg:ResidualRandomGreedyVariant} has an approximation ratio of at least $c(\gamma) - \eps(k + k') > c(\gamma) - [c(\gamma_I) - c']$ in general, which implies that for the specific instance $I$ the approximation ratio of Algorithm~\ref{alg:ResidualRandomGreedyVariant} is strictly better than $c'$.

The final step required for getting the contradiction that we seek is to observe that Algorithms~\ref{alg:ResidualRandomGreedy} and~\ref{alg:ResidualRandomGreedyVariant} share an identical output distribution for every given instance. Before we explain why that observation is true, let us note that it indeed implies a contradiction because Algorithm~\ref{alg:ResidualRandomGreedy} has an approximation ratio of $c'$ for the instance $I$, while Algorithm~\ref{alg:ResidualRandomGreedyVariant} has a strictly better approximation ratio for this instance. Thus, it remains to explain why Algorithms~\ref{alg:ResidualRandomGreedy} and~\ref{alg:ResidualRandomGreedyVariant} have identical output distributions, which is what we do in the rest of this paragraph. Note that Algorithm~\ref{alg:ResidualRandomGreedyVariant} must add all the elements of $\cN'$ to its solution set at some point because every base of the extended matroid $\cM$ contains all of $\cN'$. This means that we can view Algorithm~\ref{alg:ResidualRandomGreedyVariant} as a variant of Algorithm~\ref{alg:ResidualRandomGreedy} that has $k'$ more rounds, but must waste $k'$ of its rounds on adding the elements of $\cN'$ which do not affect anything and are removed at the end anyhow. Hence, the two algorithms share an identical behavior if we disregard the extra $k'$ rounds that Algorithm~\ref{alg:ResidualRandomGreedyVariant} wastes on adding elements of $\cN'$.
\end{proof}

\end{document}